\newcommand{\name}{\textsc{Muvera}}
\newcommand{\CH}{\textsc{Chamfer}}
\newcommand{\dproj}{d_{\texttt{proj}}}
\newcommand{\ksim}{k_{\texttt{sim}}}
\newcommand{\reps}{R_{\texttt{reps}}}
\newcommand{\dfinal}{d_{\texttt{final}}}
\newcommand{\nCH}{\textsc{NChamfer}}
\newcommand{\dfde}{d_{\texttt{FDE}}}
\newcommand{\fde}{\mathbf{F}}
\newcommand{\fdeq}{\mathbf{F}_{\text{q}}}
\newcommand{\fded}{\mathbf{F}_{\text{doc}}}
\newcommand{\buckets}{B}
\title{MUVERA: Multi-Vector Retrieval via Fixed Dimensional Encodings}
\author{%
Laxman Dhulipala \\
 Google Research and UMD \\
\And
Majid Hadian \\
Google DeepMind \\
\And
Rajesh Jayaram\thanks{Corresponding Author: \texttt{rkjayaram@google.com}} \\
Google Research \\
\AND
Jason Lee\\
Google Research \\
\And
Vahab Mirrokni \\
Google Research \\
}
\begin{document}

\maketitle

\begin{abstract}
Neural embedding models have become a fundamental component of modern information retrieval (IR) pipelines. These models produce a single embedding $x \in \R^d$ per data-point, allowing for fast retrieval via highly optimized maximum inner product search (MIPS) algorithms. Recently, beginning with the landmark ColBERT paper, \textit{multi-vector models}, which produce a set of embeddings per data-point, have achieved markedly superior performance for IR tasks. Unfortunately, using these models for IR is computationally expensive due to the increased complexity of multi-vector retrieval and scoring. 

In this paper, we introduce $\name$ (\textbf{Mu}lti-\textbf{Ve}ctor \textbf{R}etrieval \textbf{A}lgorithm), a retrieval mechanism which reduces \textit{multi-vector} similarity search to \textit{single-vector} similarity search. 
This enables the usage of off-the-shelf MIPS solvers for multi-vector retrieval. 
$\name$ asymmetrically generates \textit{Fixed Dimensional Encodings} (FDEs) of queries and documents, which are vectors whose inner product approximates multi-vector similarity. We prove that FDEs give high-quality $\epsilon$-approximations, thus providing the first single-vector proxy for multi-vector similarity with theoretical guarantees.
Empirically, we find that FDEs achieve the same recall as prior state-of-the-art heuristics while retrieving 2-5$\times$ fewer candidates. Compared to prior state of the art implementations, $\name$ achieves consistently good end-to-end recall and latency across a diverse set of the BEIR retrieval datasets, achieving an average of 10$\%$ improved recall with $90\%$ lower latency.

\end{abstract}

\section{Introduction}

Over the past decade, the use of neural embeddings for representing data has become a central tool for information retrieval (IR)~\cite{zhang2016neural}, among many other tasks such as clustering and classification~\cite{muennighoff2022mteb}. 
Recently, \emph{multi-vector} (MV) representations, introduced by the \textit{late-interaction} framework in ColBERT~\cite{khattab2020colbert}, have been shown to deliver significantly improved performance on popular IR benchmarks. ColBERT and its variants \cite{gao2021coil,hofstatter2022introducing,lee2024rethinking,lin2024fine,qian2022multi,santhanam2021colbertv2,wang2021pseudo,yao2021filip} produce \textit{multiple} embeddings per query or document by generating one embedding per token. The query-document similarity is then scored via the \emph{Chamfer Similarity} (§\ref{sec:Chamfer}), also known as the MaxSim operation, between the two sets of vectors. These multi-vector representations have many advantages over single-vector (SV) representations, such as better
interpretability \cite{formal2021white,wang2023reproducibility} and generalization \cite{lupart2023ms,formal2022match,zhan2022evaluating,weller2023nevir}.

 Despite these advantages, multi-vector retrieval is inherently more expensive than single-vector retrieval. Firstly, producing one embedding per token increases the number of embeddings in a dataset by orders of magnitude.
Moreover, due to the non-linear Chamfer similarity scoring, there is a lack of optimized systems for multi-vector retrieval. 
Specifically, single-vector retrieval is generally accomplished via Maximum Inner Product Search (MIPS) algorithms, which have been highly-optimized over the past few decades \cite{guo2016quantization}. However, SV MIPS alone cannot be used for MV retrieval. This is because the MV similarity is the \emph{sum} of the SV similarities of each embedding in a query to the nearest embedding in a document. Thus, a document containing a token with high similarity to a single query token may not be very similar to the query overall.
Thus, in an effort to close the gap between SV and MV retrieval, there has been considerable work in recent years to design custom MV retrieval algorithms with improved efficiency \cite{santhanam2022plaid, engels2024dessert, hofstatter2022introducing, qian2022multi}.

\begin{figure}
\vspace{-2em}
    \centering
 \includegraphics[width = \textwidth]{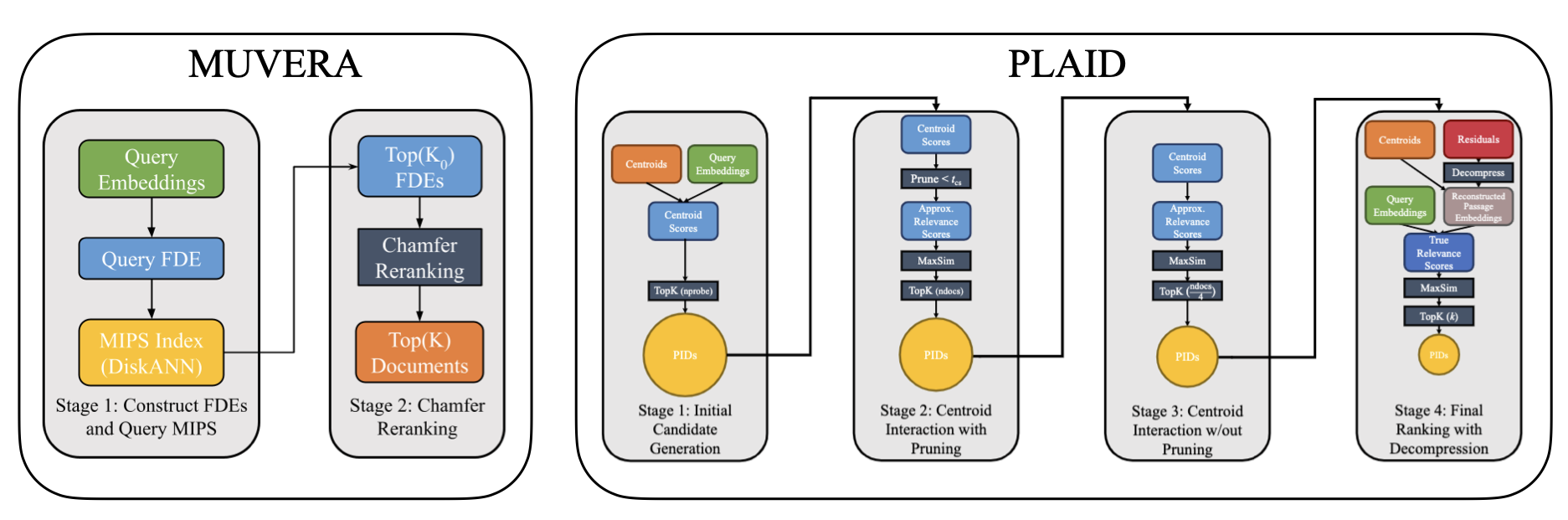}
    \caption{\small $\name$'s two-step retrieval process, compared to PLAID's multi-stage retrieval process. Diagram on the right from Santhanam et. al. \cite{santhanam2022plaid} with permission.}
    \label{fig:MAVEN}
\end{figure}

 The most prominent approach to MV retrieval is to employ a multi-stage pipeline beginning with single-vector MIPS. The basic version of this approach is as follows: in the initial stage, the most similar document tokens are found for each of the query tokens using SV MIPS. Then the corresponding documents containing these tokens are gathered together and rescored with the original Chamfer similarity. We refer to this method as the \textit{single-vector heuristic}. 
 ColBERTv2~\cite{santhanam2021colbertv2} and its optimized retrieval engine PLAID~\cite{santhanam2022plaid}
 are based on this approach, with the addition of several intermediate stages of pruning. In particular, PLAID employs a complex \textit{four}-stage retrieval and pruning process to gradually reduce the number of final candidates to be scored (Figure \ref{fig:MAVEN}). Unfortunately, as described above, employing SV MIPS on individual query embeddings can fail to find the true MV nearest neighbors. Additionally, this process is expensive, since it requires querying a significantly larger MIPS index for \emph{every} query embedding (larger because there are multiple embeddings per document). 
 Finally, these multi-stage pipelines are complex and highly sensitive to parameter setting, as recently demonstrated in a reproducibility study \cite{macavaney2024reproducibility}, making them difficult to tune. To address these challenges
 and bridge the gap between single and multi-vector retrieval, 
 in this paper we seek to design faster and simplified MV retrieval algorithms.

\paragraph{Contributions.}
We propose $\name$: a multi-vector retrieval mechanism based on a light-weight and provably correct reduction to single-vector MIPS. $\name$ employs a fast, data-oblivious transformation from a set of vectors to a single vector, allowing for retrieval via highly-optimized MIPS solvers before a single stage of re-ranking. Specifically, $\name$ transforms query and document MV sets $Q,P \subset \R^d$ into single fixed-dimensional vectors $\vec{q}, \vec{p}$, called \emph{Fixed Dimensional Encodings} (FDEs), such that the dot product $\vec{q} \cdot \vec{p}$ approximates the multi-vector similarity between $Q,P$ (§\ref{sec:fde}).  Empirically, we show that retrieving with respect to the FDE dot product significantly outperforms the single vector heuristic at recovering the Chamfer nearest neighbors (§\ref{sec:fde-experimental}). For instance, on MS MARCO, our FDEs Recall$@N$ surpasses the Recall$@$2-5N achieved by the SV heuristic while scanning a similar total number of floats in the search.

We prove in (§\ref{sec:fde-theory}) that our FDEs have strong approximation guarantees; specifically, the FDE dot product gives an $\eps$-approximation to the true MV similarity. This gives the first algorithm with provable guarantees for Chamfer similarity search with strictly faster than brute-force runtime (Theorem \ref{thm:FDE-ANN}). Thus, $\name$ provides the first principled method for MV retrieval via a SV proxy.

We compare the end-to-end retrieval performance of $\name$ to PLAID on several of the BEIR IR datasets, including the well-studied MS MARCO dataset. 
We find \name{} to be a robust and efficient retrieval mechanism; across the datasets we evaluated, \name{} obtains an average of 10\% higher recall, while requiring 90\% lower latency on average compared with PLAID. Additionally, $\name$ crucially incorporates a vector compression technique called {\em product quantization} that enables us to compress the FDEs by 32$\times$ (i.e., storing 10240 dimensional FDEs using 1280 bytes) while incurring negligible quality loss, resulting in a significantly smaller memory footprint.

\subsection{Chamfer Similarity and the Multi-Vector Retrieval Problem}\label{sec:Chamfer}
Given two sets of vectors $Q,P \subset \R^d$, the \emph{Chamfer Similarity} is given by 
\begin{equation*}
\CH(Q,P) = \sum_{q \in Q} \max_{p \in P} \langle q,p\rangle
\end{equation*}
where $\langle \cdot ,\cdot\rangle$ is the standard vector inner product.  
Chamfer similarity is the default method of MV similarity used in the \textit{late-interaction} architecture of ColBERT, which includes systems like ColBERTv2 \cite{santhanam2021colbertv2},
Baleen \cite{khattab2021baleen}, Hindsight \cite{paranjape2021hindsight}, DrDecr \cite{li2021learning}, and XTR \cite{lee2024rethinking}, among many others. These models encode queries and documents as sets $Q,P \subset \R^d$ (respectively), where the query-document similarity is given by $\CH(Q,P)$. We note that Chamfer Similarity (and its distance variant) itself has a long history of study in the computer vision (e.g.,~\cite{barrow1977parametric,athitsos2003estimating,sudderth2004visual,fan2017point,jiang2018gal}) and graphics~\cite{li2019lbs} communities, and had been previously used in the ML literature to compare sets of embeddings~\cite{kusner2015word,wan2019transductive,atasu19a,bakshi2024near}. In these works, Chamfer is also referred to as \emph{MaxSim} or the \emph{relaxed earth mover distance}; we choose the terminology \textit{Chamfer} due to its historical precedence \cite{barrow1977parametric}.

In this paper, we study the problem of Nearest Neighbor Search (NNS) with respect to the Chamfer Similarity. Specifically, we are given a dataset $D = \{P_1,\dots, P_n\}$ where each $P_i \subset \R^d$ is a set of vectors. Given a query subset $Q \subset \R^d$, the goal is to quickly recover the nearest neighbor $P^* \in D$, namely:
\[P^* = \arg \max_{P_i \in D} \CH(Q,P_i)\] 

For the retrieval system to be scalable, this must be achieved in time significantly faster than brute-force scoring each of the $n$ similarities $\CH(Q,P_i)$ within $D$.

\subsection{Our Approach: Reducing Multi-Vector Search to Single-Vector MIPS}
 $\name$ is a streamlined procedure that directly reduces the Chamfer Similarity Search to MIPS. For a pre-specified target dimension $\dfde$, $\name$ produces randomized mappings $\fdeq:2^{\R^d}\to \R^{\dfde}$ (for queries) and $\fded:2^{\R^d}\to \R^{\dfde}$ (for documents) such that, for all query and document  multi-vector representations $Q,P \subset \R^d$ , we have:
\[\langle \fdeq(Q) , \fded(P) \rangle \approx \CH(Q,P) \]
We refer to the vectors $\fdeq(Q) , \fded(P) $ as \emph{Fixed Dimensional Encodings} (FDEs). $\name$ first applies $\fded$ to each document representation $P \in D$, and indexes the set $\{\fded(P)\}_{P \in D}$ into a MIPS solver. Given a query $Q \subset \R^d$, $\name$ quickly computes $\fdeq(Q)$ and feeds it to the MIPS solver to recover the top-$k$ most similar document FDE's $\fded(P)$. Finally, we re-rank these candidates by the original Chamfer similarity. See Figure \ref{fig:MAVEN} for an overview. We remark that one important advantage of the FDEs is that the functions $ \fdeq, \fded$ are \emph{data-oblivious}, making them robust to distribution shifts, and easily usable in streaming settings.

\subsection{Related Work on Multi-Vector Retrieval}

The early multi-vector retrieval systems, such as ColBERT \cite{khattab2020colbert}, all implement optimizations of the previously described SV heuristic, where the initial set of candidates is found by querying a MIPS index for every query token $q \in Q$. 
In ColBERTv2 \cite{santhanam2021colbertv2}, the document token embeddings are first clustered via k-means, and the first round of scoring uses cluster centroids instead of the original token. 
This technique was further optimized in PLAID \cite{santhanam2022plaid} by employing a four-stage pipeline to progressively prune candidates before a final re-ranking (Figure \ref{fig:MAVEN}).

An alternative approach was proposed in DESSERT \cite{engels2024dessert}, whose authors also pointed out the limitations of the SV heuristic, and proposed an algorithm based on Locality Sensitive Hashing (LSH) \cite{HIM12}. 
They prove that their algorithm recovers $\eps$-approximate nearest neighbors in time $\tilde{O}(n |Q| T  )$, where $T$ is roughly the maximum number of document tokens $p \in P_i$ that are similar to any query token $q \in Q$, which can be as large as $\max_i |P_i|$. Thus, in the worst case, aside from saving a factor of $d$, their algorithm runs no faster than brute-force. Conversely, our algorithm recovers $\eps$-approximate nearest neighbors and \emph{always} runs in time $\tilde{O}(n|Q|)$. Experimentally, DESSERT is 2-5$\times$ faster than PLAID, but attains worse recall (e.g. 2-2.5$\%$ R$@$1000 on MS MARCO). Conversely, we match and sometimes strongly exceed PLAID's recall with up to 5.7$\times$ lower latency. Additionally, DESSERT still employs an initial filtering stage based on $k$-means clustering of individual query token embeddings (in the manner of ColBERTv2), thus they do not truly avoid the aforementioned limitations of the SV heuristic.

\section{Fixed Dimensional Encodings}
\label{sec:fde}

We now describe our process for generating FDEs.
Our transformation is reminiscent of the technique of probabilistic tree embeddings~\cite{B96, FRT04, AIK08, CJLW22}, which can be used to transform a set of vectors into a single vector. For instance, they have been used to embed the Earth Mover's Distance into the $\ell_1$ metric~\cite{I04b, AIK08, CJLW22,  jayaram2024data}, and to embed the weight of a Euclidean MST of a set of vectors into the Hamming metric \cite{I04b, jayaram2024massively, 10.1145/3564246.3585168}. However, since we are working with inner products, which are not metrics, instead of $\ell_p$ distances, an alternative approach for our transformation will be needed.

 The intuition behind our transformation is as follows. Hypothetically, for two MV representations $Q,P \subset \R^d$, if we knew the optimal mapping $\pi:Q \to P$ in which to match them, then we could create vectors $\vec{q} ,\vec{p}$ by concatenating all the vectors in $Q$ and their corresponding images in $P$ together, so that $\langle \vec{q},\vec{p}\rangle = \sum_{q \in Q} \langle q, \pi(q) \rangle = \CH(Q,P)$. However, since we do not know $\pi$ in advance, and since different query-document pairs have different optimal mappings, this simple concatenation clearly will not work. Instead, our goal is to find a randomized ordering over \textit{all} the points in $\R^d$ so that, after clustering close points together, the dot product of \textit{any} query-document pair $Q,P \subset \R^d$ concatenated into a single vector under this ordering will approximate the Chamfer similarity. 

The first step is to partition the latent space $\R^d$ into $\buckets$ clusters so that vectors that are closer are more likely to land in the same cluster. 
Let $\bvarphi:\R^d \to [\buckets]$ be such a partition (for an integer $N\geq 1$ we use the notation $[N] = \{1,2,\dots,N\}$); $\bvarphi$ can be implemented via Locality Sensitive Hashing (LSH) \cite{HIM12}, $k$-means, or other methods; we discuss choices for $\bvarphi$ later in this section. After partitioning via $\bvarphi$, the hope is that for each $q \in Q$, the closest $p \in P$ lands in the same cluster (i.e. $\bvarphi(q) = \bvarphi(p)$). Hypothetically, if this were to occur, then:
\begin{equation}\label{eqn:chamfer-ideal}
\CH(Q,P) = \sum_{k=1}^{\buckets} \sum_{\substack{q \in Q \\ \bvarphi(q) = k}} \max_{\substack{p \in P \\ \bvarphi(p)=k}}\langle q,p\rangle
\end{equation}
If $p$ is the only point in $P$ that collides with $q$, then (\ref{eqn:chamfer-ideal}) can be realized as a dot product between two vectors $\vec{q},\vec{p}$ by creating one block of $d$ coordinates in $\vec{q},\vec{p}$ for each cluster $k \in [B]$ (call these blocks $\vec{q}_{(k)},\vec{p}_{(k)} \in \R^{d}$), and setting $\vec{q}_{(k)},\vec{p}_{(k)}$ to be the sum of all $q \in Q$ (resp. $p \in P$) that land in the $k$-th cluster under $\bvarphi$. 
 However, if multiple $p' \in P$ collide with $q$, then $\langle \vec{q}, \vec{p} \rangle$ will differ from (\ref{eqn:chamfer-ideal}), since \emph{every} $p'$ with $\bvarphi(p') = \bvarphi(q)$ will contribute at least $\langle q,p'\rangle$ to $\langle \vec{q}, \vec{p} \rangle$. To resolve this, we set $\vec{p}_{(k)}$ to be the \emph{centroid} of the $p \in P$'s  with $\bvarphi(p) = \bvarphi(q)$. Formally, for $k=1,\dots,\buckets$, we define 
\begin{equation}\label{eqn:define-FDE}
\vec{q}_{(k)}= \sum_{\substack{q \in Q \\ \bvarphi(q) = k}}  q, \qquad \qquad \vec{p}_{(k)} =  \frac{1}{\max\{1,|P \cap \bvarphi^{-1}(k)|\}} \sum_{\substack{p \in P \\ \bvarphi(p)=k}} p \end{equation}
Setting $\vec{q} =  (\vec{q}_{(1)},\dots,\vec{q}_{(\buckets)})$ and $\vec{p} = (\vec{p}_{(1)},\dots,\vec{p}_{(\buckets)})$, then we have 
\begin{equation}\label{eqn:chamfer-approx}
  \langle \vec{q}, \vec{p} \rangle =   \sum_{k=1}^{\buckets} \sum_{\substack{q \in Q\\ \bvarphi(q) = k}}  \frac{1}{|P \cap \bvarphi^{-1}(k)|} \sum_{\substack{p \in P \\ \bvarphi(p)=k}} \langle q,p\rangle
\end{equation}
Note that the resulting dimension of the vectors $\vec{q},\vec{p}$ is $d  \buckets$. To reduce the dependency on $d$, we can apply a random linear projection $\bpsi:\R^d \to \R^{\dproj}$ to each block $\vec{q}_{(k)},\vec{p}_{(k)}$, where $\dproj < d$. Specifically, we define $\bpsi(x) =(1/ \sqrt{\dproj})S x$, where  $S \in \R^{\dproj \times d}$ is a random matrix with uniformly distributed $\pm 1$ entries. We can then define  $\vec{q}_{(k),\bpsi}= \bpsi(\vec{q}_{(k)})$ and $\vec{p}_{(k),\bpsi}= \bpsi(\vec{p}_{(k)})$, and define the \emph{FDE's with inner projection} as $\vec{q}_{\bpsi} =  (\vec{q}_{(1),\bpsi},\dots,\vec{q}_{(\buckets),\bpsi})$ and $\vec{p}_{\bpsi} = (\vec{p}_{(1),\bpsi},\dots,\vec{p}_{(\buckets),\bpsi})$.  When $d=\dproj$, we simply define $\psi$ to be the identity mapping, in which case $\vec{q}_{\bpsi}, \vec{p}_{\bpsi}$ are identical to $\vec{q}, \vec{p}$. 
To increase accuracy of (\ref{eqn:chamfer-approx}) in approximating (\ref{eqn:chamfer-ideal}), we repeat the above process $\reps \geq 1$ times independently, using different randomized partitions $\bvarphi_1,\dots,\bvarphi_{\reps}$ and projections $\bpsi_1,\dots,\bpsi_{\reps}$. We denote the vectors resulting from $i$-th repetition by $\vec{q}_{i,\bpsi}, \vec{p}_{i,\bpsi}$. Finally, we concatenate these $\reps$ vectors together, so that our final FDEs are defined as $\fdeq(Q) = (\vec{q}_{1,\bpsi},\dots, \vec{q}_{\reps,\bpsi})$ and $\fded(P) = (\vec{p}_{1,\bpsi},\dots, \vec{p}_{\reps,\bpsi})$. 
Observe that a complete FDE mapping is specified by the three parameters $(\buckets, \dproj, \reps)$, resulting in a final dimension of $\dfde =  \buckets \cdot \dproj \cdot \reps $.

\begin{figure}
    \centering
    \vspace{-2em}
 \includegraphics[scale = 0.55]{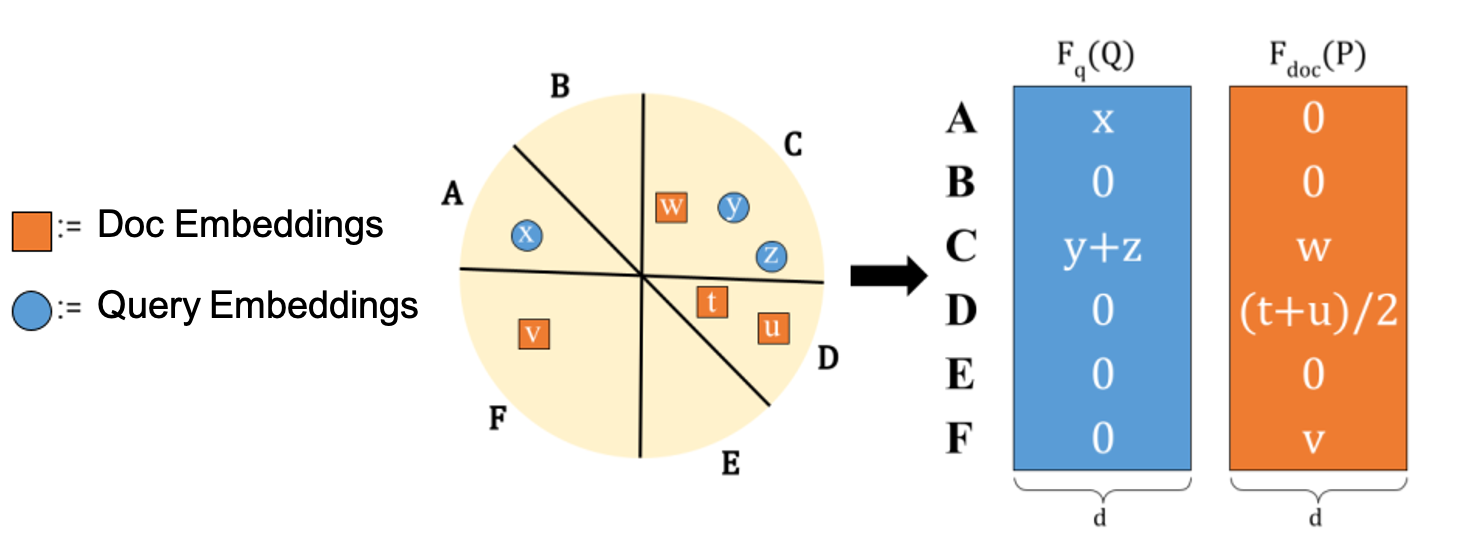}
    \caption{\small FDE Generation Process. Three SimHashes ($\ksim = 3$) split space into six regions labelled $A$-$F$ (in high-dimensions $B= 2^{\ksim}$, but $B=6$ here since $d=2$). $\fdeq(Q),\fded(P)$ are shown as $B \times d$  matrices, where the $k$-th row is $\vec{q}_{(k)}, \vec{p}_{(k)}$.
   The actual FDEs are flattened versions of these matrices. Not shown: inner projections, repetitions, and \texttt{fill\_empty\_clusters}.}
    \label{fig:FDE}
\end{figure}

\textbf{Choice of Space Partition.} When choosing the partition function $\bvarphi$, the desired property is that points are more likely to collide (i.e. $\bvarphi(x)=\bvarphi(y)$) the closer they are to each other. Such functions with this property exist, and are known as \textit{locality-sensitive hash functions} (LSH) (see \cite{HIM12}). When the vectors are normalized, as they are for those produced by ColBERT-style models, SimHash \cite{charikar2002similarity} is the standard choice of LSH. Specifically, for any $\ksim \geq 1$, we sample random Gaussian vectors $g_1,\dots,g_{\ksim} \in \R^d$, and set $\bvarphi(x)= (\mathbf{1}(\langle g_1, x\rangle > 0), \dots,\mathbf{1}(\langle g_{\ksim} , x\rangle> 0) )$, where $\mathbf{1}(\cdot) \in \{0,1\}$ is the indicator function.  Converting the bit-string to decimal, $\bvarphi(x)$ gives a mapping from $\R^d$ to $[\buckets]$ where $\buckets = 2^{\ksim}$. In other words, SimHash partitions $\R^d$ by drawing $\ksim$ random half-spaces, and each of the $2^{\ksim}$ clusters is formed by the $\ksim$-wise intersection of each of these halfspaces or their complement.
Another natural approach is to choose $k_{\textsc{center}} \geq 1$ centers from the collection of all token embeddings $\cup_{i=1}^n P_i$, either randomly or via $k$-means, and set $\bvarphi(x) \in [k_{\textsc{center}}]$ to be the index of the center nearest to $x$. We compare this method to SimHash in (§\ref{sec:fde-experimental}).

\textbf{Filling Empty Clusters.} 
A key source of error in the FDE's approximation is when the nearest vector $p \in P$ to a given query embedding $q \in Q$ maps to a different cluster, namely $\bvarphi(p) \neq \bvarphi(q) = k$. This can be made less likely by decreasing $B$, at the cost of making it more likely for other $p' \in P$ to also map to the same cluster, moving the centroid $\vec{p}_{(k)}$ farther from $p$. If we increase $B$ too much, it is possible that no $p \in P$ collides with $\bvarphi(q)$. To avoid this trade-off, we directly ensure that if no $p \in P$ maps to a cluster $k$, then instead of setting $\vec{p}_{(k)} = 0$ we set $\vec{p}_{(k)}$ to the point $p$ that is \emph{closest} to cluster $k$. As a result, increasing $B$ will result in a more accurate estimator, as this results in smaller clusters. 
Formally, for any cluster $k$ with $P \cap \bvarphi^{-1}(k) = \emptyset$, if \texttt{fill\_empty\_clusters} is enabled, we set $\vec{p}_{(k)} = p$ where $p \in P$ is the point for which $\bvarphi(p)$ has the fewest number of disagreeing bits with $k$ (both thought of as binary strings), with ties broken arbitrarily. We do not enable this for query FDEs, as doing so would result in a given $q \in Q$ contributing to the dot product multiple times.

\textbf{Final Projections.} A natural approach to reducing the dimensionality is to apply a final projection  $\bpsi':\R^{\dfde} \to \R^{\dfinal}$ (also implemented via multiplication by a random $\pm 1$ matrix) to the FDE's, reducing the final dimensionality to any $\dfinal < \dfde$. 
Experimentally, we find that final projections can provide small but non-trivial $1$-$2\%$ recall boosts for a fixed dimension (see  §\ref{app:finalproj}).

\subsection{Theoretical Guarantees for FDEs}
\label{sec:fde-theory}

We now state our theoretical guarantees for our FDE construction. For clarity, we state our results in terms of normalized Chamfer similarity $\nCH(Q,P) = \frac{1}{|Q|}\CH(Q,P)$. This ensures $\nCH(Q,P) \in [-1,1]$ whenever the vectors in $Q,P$ are normalized. Note that this factor of $1/|Q|$ does not affect the relative scoring of documents for a fixed query. 
In what follows, we assume that all token embeddings are normalized (i.e. $\|q\|_2 = \|p\|_2 = 1$ for all $q \in Q, p \in P)$. Note that ColBERT-style late interaction MV models indeed produce normalized token embeddings. We will always use the \texttt{fill\_empty\_clusters} method for document FDEs, but never for queries.

Our main result is that FDEs give $\eps$-additive approximations of the Chamfer similarity. The proof uses the properties of LSH (SimHash) to show that for each query point $q \in Q$, the point $q$ gets mapped to a cluster $\varphi(q)$ that \emph{only} contains points $p \in P$ that are close to $q$ (within $\eps$ of the closest point to $q$); the fact that at least one point collides with $q$ uses the \texttt{fill\_empty\_partitions} method. 

\begin{theorem}[FDE Approximation]\label{thm:FDE-approx}
Fix any $\eps ,\delta > 0$, and sets $Q,P \subset \R^d$ of unit vectors, and let $m=|Q| + |P|$. 
Then setting $\ksim = O\left(\frac{\log (m\eps^{-1}\delta^{-1})}{\epsilon^2}\right)$, $\dproj = O\left(\frac{1}{\eps^2}  \log (\frac{m}{\eps\delta})\right)$, $\reps = 1$, so that $\dfde = (\frac{m}{\epsilon \delta})^{O(1/\eps^2)}$, 
then in expectation and with probability at least $1-\delta$ we have
 \[  \nCH(Q,P)  - \eps \leq \frac{1}{|Q|}\langle \fdeq(Q) , \fded(P) \rangle \leq  \nCH(Q,P) + \eps  \]
\end{theorem}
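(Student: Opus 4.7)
I plan to prove the bound first in the special case $\reps = 1$ and $\bpsi = \mathrm{id}$ (no inner projection), and then layer on the projection via Johnson--Lindenstrauss-type concentration. Define the per-query contribution
\[
\mu(q) \;:=\; \langle q, \vec{p}_{(\bvarphi(q))}\rangle,
\]
so $\langle \fdeq(Q), \fded(P)\rangle = \sum_{q \in Q} \mu(q)$. By construction $\mu(q)$ is a convex combination of $\{\langle q, p\rangle\}_{p \in P}$ --- either an average over the points of $P$ that collide with $q$ under $\bvarphi$, or a single $\langle q, p'\rangle$ when the bucket is empty and \texttt{fill\_empty\_clusters} inserts the Hamming-closest $p'$ --- so the upper bound $\mu(q) \le \max_{p \in P} \langle q, p\rangle$ is automatic. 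The entire task is therefore to show that with probability at least $1-\delta$, every $q \in Q$ satisfies $\mu(q) \geq \max_{p \in P}\langle q, p\rangle - \eps$; summing over $q$ and dividing by $|Q|$ then yields the theorem.

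\textbf{Bad points are unlikely to collide.} Fix $q \in Q$, let $s^\star = \max_{p \in P}\langle q, p\rangle$ be attained at $p^\star$ with angle $\theta^\star$, and call a point $p$ \emph{bad} if $\langle q, p\rangle < s^\star - \eps$, equivalently (by $1$-Lipschitzness of $\cos$ on the unit sphere) $\theta(q, p) > \theta^\star + \eps$. Using the SimHash identity $\Pr[\bvarphi(q) = \bvarphi(p)] = (1 - \theta(q,p)/\pi)^{\ksim}$ together with $\ksim = \Theta(\eps^{-1}\log(m/\delta))$, the probability that a bad $p$ collides with $q$ is at most $(\delta/m)^{\Omega(1)}$. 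A union bound over the $O(m^2)$ query-point pairs then shows that, except on a failure event of probability $O(\delta)$, no bad $p$ lies in the bucket of any $q \in Q$. On this good event, whenever $q$'s bucket is non-empty, $\mu(q)$ is an average of values all at least $s^\star - \eps$, as desired.

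\textbf{Empty buckets and the fill-in --- the main obstacle.} When $q$'s bucket is empty, \texttt{fill\_empty\_clusters} sets $\mu(q) = \langle q, p'\rangle$ for $p' = \arg\min_{p \in P} H(\bvarphi(q), \bvarphi(p))$, where $H$ is Hamming distance on the $\ksim$-bit SimHash codes. I need to show $p'$ is not bad. Observe that $H(\bvarphi(q), \bvarphi(p))$ is a sum of $\ksim$ independent Bernoullis with mean $\theta(q, p)/\pi$, so in expectation the ranking by $H$ matches the ranking by $\theta$, and therefore by $\langle q, \cdot\rangle$. To pass from expectation to a high-probability statement, I would use a Chernoff-based pairwise comparison: it suffices that for any good $p_g$ and bad $p_b$, $H(\bvarphi(q), \bvarphi(p_g)) < H(\bvarphi(q), \bvarphi(p_b))$ with probability $1 - \delta/m^3$, after which a union bound over pairs $(p_g, p_b)$ forces the Hamming argmin $p'$ to be good. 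I expect this to be the hardest step: matching the stated $\ksim = O(\eps^{-1}\log(m/\delta))$ rather than $\eps^{-2}$ likely requires applying the multiplicative Chernoff bound directly to the gap $H(\bvarphi(q), \bvarphi(p_b)) - H(\bvarphi(q), \bvarphi(p_g))$, whose expectation is already $\Omega(\eps)\cdot \ksim / \pi$, rather than bounding each Hamming distance against its own mean.

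\textbf{Projection and the expectation statement.} Once the no-projection bound is established, I would turn on $\bpsi$: for any fixed $u,v \in \mathbb{R}^d$, the JL estimator $\langle \bpsi(u), \bpsi(v)\rangle$ is unbiased for $\langle u, v\rangle$ and sub-Gaussian at scale $\|u\|_2\|v\|_2/\sqrt{\dproj}$. A union bound over the at most $\buckets \cdot |Q|$ pairs $(\bpsi(q), \bpsi(\vec{p}_{(k)}))$ that actually appear in the FDE dot product, together with $\dproj = O(\eps^{-2}\log(m/(\eps\delta)))$, preserves each such inner product to additive error $\eps/|Q|$, so the total projection-induced error after normalising by $|Q|$ is $O(\eps)$. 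For the in-expectation half of the claim, I would combine the high-probability bound with the almost-sure bound $|\mu(q)| \le 1$ (from the convex combination form), so that the $\delta$-probability failure event contributes at most $\delta$ to $\mathbb{E}[\mu(q)]$ and can be absorbed into $\eps$ by adjusting $\delta$ down by a constant factor.
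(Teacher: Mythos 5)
Most of your plan tracks the paper's proof: the same per-query decomposition with the automatic one-sided upper bound (the paper's Lemma \ref{lem:oneside}), a union-bound argument showing non-empty buckets contain no ``bad'' points (the paper routes this through Hamming-distance concentration rather than collision probabilities, but your direct version, using that a bad $p$ has $\theta(q,p)\geq \eps$ and hence collides with probability $(1-\theta/\pi)^{\ksim}\leq e^{-\eps\ksim/\pi}\leq \delta/m^2$, works with the stated $\ksim$), and the same treatment of the expectation via the almost-sure bound $|\mu(q)|\leq 1$. The place where your route genuinely diverges is the one you flag yourself: the empty-bucket/\texttt{fill\_empty\_clusters} case, and there your proposed argument does not go through at the stated parameters.

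Concretely, you want $\|\bvarphi(q)-\bvarphi(p_g)\|_0 < \|\bvarphi(q)-\bvarphi(p_b)\|_0$ with probability $1-\delta/m^3$ for every good/bad pair. The per-hyperplane gap indicator $Z_b-Z_g$ has mean $(\theta(q,p_b)-\theta(q,p_g))/\pi=\Theta(\eps)$, but its variance is $\Pr[Z_b\neq Z_g]=\theta(p_g,p_b)/\pi$, which is $\Theta(1)$ whenever the good and bad points are far from each other (e.g. $\theta(q,p^*)=\pi/3$, $\theta(q,p_b)=\pi/3+2\eps$, with $p_b$ far from $p^*$). Then $\Pr\bigl[\sum_i (Z_{b,i}-Z_{g,i})\leq 0\bigr]=\exp(-\Theta(\eps^2\ksim))$, and this is essentially tight; with $\ksim=O(\eps^{-1}\log(m/\delta))$ a \emph{single} comparison already fails with probability $(m/\delta)^{-\Theta(\eps)}$, i.e. near-constant for small $\eps$, so no union bound over pairs is possible. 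Multiplicative Chernoff/Bernstein on the gap only saves you when the variance is comparable to the mean, i.e. when $\theta(p_g,p_b)=O(\eps)$, not in general; forcing the comparison at angular resolution $\eps$ genuinely needs $\ksim=\Omega(\eps^{-2}\log(m/\delta))$. The paper's missing idea is to \emph{not} rank at resolution $\eps$: it only proves each Hamming distance concentrates to within $\pm\sqrt{\eps}\,\ksim$ of its mean (Hoeffding failure $e^{-2\eps\ksim}$, compatible with $\ksim=O(\eps^{-1}\log(m/(\eps\delta)))$), so the filled-in point's angle exceeds $\theta(q,p^*)$ by at most $O(\sqrt{\eps})$, and it then converts this coarser angular error into only an $O(\eps)$ loss in the inner product via the second-order (Taylor) behaviour of cosine. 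That trade---coarse $\sqrt{\eps}$ angular accuracy, quadratic cosine loss---is what buys the $1/\eps$ rather than $1/\eps^2$ dependence in $\ksim$, and it is the step your outline is missing. (A minor separate slip: in the projection step you ask for per-pair error $\eps/|Q|$, which the stated $\dproj=O(\eps^{-2}\log(m/(\eps\delta)))$ cannot give; per-pair error $\eps$ suffices, since the sum of $|Q|$ terms is normalized by $1/|Q|$, which is how the paper argues it.)
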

Finally, we show that our FDE's give an $\eps$-approximate solution to Chamfer similarity search, using FDE dimension that depends only \emph{logarithmically} on the size of the dataset $n$. Using the fact that our query FDEs are sparse (Lemma \ref{lem:runtime}), one can run exact MIPS over the FDEs in time $\tilde{O}(|Q|\cdot n)$, improving on the brute-force runtime of $O(|Q| \max_i|P_i| n)$ for Chamfer similarity search.

\begin{theorem}\label{thm:FDE-ANN}
Fix any $\eps > 0$, query $Q$, and dataset $P = \{P_1,\dots,P_n\}$, where $Q \subset \R^d$ and each $P_i \subset \R^d$ is a set of unit vectors. Let $m=|Q| + \max_{i \in [n]}|P_i|$. 
Let $\ksim = O(\frac{\log(m/\epsilon)}{\epsilon^2})$, $\dproj = O(\frac{1}{\eps^2} \log (mdn/\eps))$ and $\reps = O(\frac{1}{\eps^2}\log n)$ so that $\dfde = (\frac{m}{\epsilon })^{O(1/\eps^2)} \cdot  \log^2 n$. Then if $i^* = \arg \max_{i \in [n]}\langle \fdeq(Q) , \fded(P_i) \rangle$, with high probability (i.e. $1-1/\poly(n)$) we have:
\[ \nCH(Q,P_{i^*}) \geq \max_{i \in [n]} \nCH(Q,P_i) - \eps \]
Given the query $Q$, the document $P^*$ can be recovered in time $O\left(|Q| \max\{d,n\}  \frac{1}{\eps^4} \log(\frac{mdn}{\eps}) \log n \right)$. 
\end{theorem}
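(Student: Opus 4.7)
}

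The plan is to boost the single-repetition guarantee of Theorem~\ref{thm:FDE-approx} to a high-probability bound over all $n$ documents simultaneously, by averaging across the $\reps$ independent repetitions and applying a union bound. Concretely, I would let $X_i(P) = \frac{1}{|Q|}\langle \vec{q}_{i,\bpsi}, \vec{p}_{i,\bpsi}\rangle$ be the contribution of the $i$-th repetition to the normalized FDE inner product, and write the full normalized FDE estimate as $\bar{X}(P) = \frac{1}{\reps}\sum_{i=1}^{\reps} X_i(P)$. By Theorem~\ref{thm:FDE-approx} applied with failure probability $\delta = \Theta(1)$ and accuracy $\eps/4$, each $X_i(P)$ is an independent estimator satisfying $|\E[X_i(P)] - \nCH(Q,P)| \leq \eps/4$. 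Since all token embeddings are unit vectors and the document block $\vec{p}_{(k)}$ is either a convex combination of unit vectors (giving norm $\leq 1$) or a filled-in unit vector, each $X_i(P)$ lies in a bounded range $[-C, C]$ for some absolute constant $C$ (with constants possibly inflated slightly by the $\pm 1$ JL-style projection $\bpsi$, which preserves inner products up to $(1\pm\eps)$ factor by the choice of $\dproj$).

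With these ingredients, Hoeffding's inequality applied to the $\reps = O(\eps^{-2}\log n)$ independent copies yields $|\bar{X}(P) - \E[\bar{X}(P)]| \leq \eps/4$ with probability at least $1 - 1/\poly(n)$. Combined with the bias bound, $|\bar{X}(P) - \nCH(Q,P)| \leq \eps/2$ for a fixed $P$. A union bound over the $n$ documents makes this hold simultaneously for all $P_i$ with probability $1 - 1/\poly(n)$. The nearest-neighbor guarantee then follows from a standard sandwiching argument: if $j^* = \arg\max_i \nCH(Q,P_i)$, then $\bar{X}(P_{i^*}) \geq \bar{X}(P_{j^*}) \geq \nCH(Q,P_{j^*}) - \eps/2$, while $\bar{X}(P_{i^*}) \leq \nCH(Q,P_{i^*}) + \eps/2$, so rearranging gives $\nCH(Q,P_{i^*}) \geq \nCH(Q,P_{j^*}) - \eps$, as desired. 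Since ranking is invariant under the $1/(|Q|\reps)$ scaling, the argmax of the unnormalized FDE inner product coincides with the argmax of $\bar{X}$.

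For the runtime, I would exploit the sparsity of the query FDE (as stated in Lemma~\ref{lem:runtime}): each query token contributes to exactly one of the $\buckets$ blocks per repetition, so $\fdeq(Q)$ has at most $|Q|\reps$ nonzero blocks of dimension $\dproj$. Computing $\fdeq(Q)$ takes $O(|Q| d \reps (\ksim + \dproj))$ time (for SimHash evaluations and inner projections), and then each document inner product $\langle \fdeq(Q), \fded(P_i)\rangle$ can be evaluated in time $O(|Q|\reps \dproj)$ by accessing only the relevant blocks. Summing over $n$ documents and plugging in $\reps \dproj = O(\eps^{-4} \log n \log(m/\eps))$ yields the claimed $O(|Q|\max\{d,n\}\eps^{-4} \log(m/\eps) \log n)$ bound.

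The main subtlety will be handling the interaction between the bias from Theorem~\ref{thm:FDE-approx} and the concentration step: since each single-repetition estimator is not unbiased but only approximately so, I have to split the target accuracy budget between the two sources of error (bias and variance) and verify that $X_i(P)$ is bounded in a data-independent way so Hoeffding applies. A minor but non-trivial point is ensuring the $\pm 1$ inner projection $\bpsi$ does not spoil either the bias bound or the boundedness; this should be absorbed into the Johnson--Lindenstrauss-style guarantee used in the proof of Theorem~\ref{thm:FDE-approx} itself, with the choice of $\dproj$ chosen large enough that the projection contributes only an additional $O(\eps)$ error over all relevant inner products (again via a union bound, this time over the $O(m)$ token pairs).
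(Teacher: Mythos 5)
Your proposal matches the paper's own proof in all essentials: it applies Theorem~\ref{thm:FDE-approx} per repetition to bound the bias, uses boundedness of each per-repetition estimator plus Chernoff/Hoeffding over the $\reps = O(\eps^{-2}\log n)$ repetitions and a union bound over the $n$ documents, concludes via the same sandwiching argument, and derives the runtime from the query-FDE sparsity of Lemma~\ref{lem:runtime}. The only differences are cosmetic (how the $\eps$ budget is split between bias and concentration, and your slightly more cautious $[-C,C]$ boundedness versus the paper's $[-1,1]$), so this is essentially the paper's argument.
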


\section{Evaluation} \label{sec:eval}
In this section, we evaluate our FDEs as a method for MV retrieval. First, we evaluate the FDEs themselves (offline) as a proxy for Chamfer similarity (§\ref{sec:fde-experimental}). In (§\ref{sec:online}), we discuss the implementation of $\name$, as well as several optimizations made in the search. Then we evaluate the latency of $\name$ compared to PLAID, and study the effects of the aforementioned optimizations.

\textbf{Datasets.} Our evaluation includes results from six of the well-studied BEIR \cite{thakur2021beir} information retrieval datasets: MS MARCO \cite{nguyen2016ms} (CC BY-SA 4.0), HotpotQA (CC BY-SA 4.0) \cite{yang2018hotpotqa}, NQ (Apache-2.0) \cite{kwiatkowski2019natural}, Quora (Apache-2.0) \cite{thakur2021beir}, SciDocs (CC BY 4.0) \cite{cohan2020specter}, and ArguAna (Apache-2.0) \cite{wachsmuth2018retrieval}. These datasets were selected for varying corpus size (8K-8.8M) and average number of document tokens (18-165); see (§\ref{app:datasets}) for further dataset statistics. Following \cite{santhanam2022plaid}, we use the development set for our experiments on MS MARCO, and use the test set on the other datasets.

\textbf{MV Model, MV Embedding Sizes, and FDE Dimensionality.}
We compute our FDEs on the MV embeddings produced by the ColBERTv2 model \cite{santhanam2021colbertv2} (MIT License), which have a dimension of $d= 128$ and a fixed number $|Q|=32$ of embeddings per query. The number of document embeddings is variable, ranging from an average of $18.3$ on Quora to $~165$ on Scidocs. This results in ~2,300-21,000 floats per document on average (e.g. 10,087 for MS MARCO). Thus, when constructing our FDEs we consider a comparable range of dimensions $\dfde$ between ~1,000-20,000. Furthermore,  using product quantization, we show in (§\ref{sec:online}) that the FDEs can be significantly compressed by $32\times$ with minimal quality loss, further increasing the practicality of FDEs.

\subsection{Offline Evaluation of FDE Quality}
\label{sec:fde-experimental}
We evaluate the quality of our FDEs as a proxy for the Chamfer similarity, without any re-ranking and using exact (offline) search. We first demonstrate that FDE recall quality improves dependably as the dimension $\dfde$ increases, making our method relatively easy to tune. We then show that FDEs are a more effective method of retrieval than the SV heuristic. Specifically, the FDE method achieves Recall$@N$ exceeding the Recall$@$2-4N of the SV heuristic, while in principle scanning a similar number of floats in the search. This suggests that the success of the SV heuristic is largely due to the significant effort put towards optimizing it (as supported by \cite{macavaney2024reproducibility}), and similar effort for FDEs may result in even bigger efficiency gains. Additional plots can be found in (§\ref{sec:additional-plots}). All recall curves use a single FDE instantiation, since in (§\ref{app:variance}) we show the variance of FDE recall is negligible.

\textbf{FDE Quality vs. Dimensionality.}
We study how the retrieval quality of FDE's improves as a function of the dimension $\dfde$. We perform a grid search over FDE parameters $\reps \in \{1,5,10,15,20\}, \ksim \in \{2,3,4,5,6\}, \dproj \in \{8,16,32,64\}$, and compute recall on MS MARCO (Figure \ref{fig:grid_search}). We find that Pareto optimal parameters are generally achieved by larger $\reps$, with $\ksim,\dproj$ playing a lesser role in improving quality. Specifically, $(\reps,\ksim,\dproj) \in\{ (20,3,8),(20,4,8)(20,5,8),(20,5,16) \}$ were all Pareto optimal for their respective dimensions (namely $\reps \cdot 2^{\ksim} \cdot \dproj$).
While there are small variations depending on the parameter choice, the FDE quality is tightly linked to dimensionality; increase in dimensionality will generally result in quality gains. 
We also evaluate using $k$-means as a method of partitioning instead of SimHash. Specifically, we cluster the document embeddings with $k$-means and set $\varphi(x)$ to be the index of the nearest centroid to $x$. We perform a grid search over the same parameters (but with $k \in \{4,8,16,32,64\}$ to match $B = 2^{\ksim}$). We find that $k$-means partitioning offers no quality gains on the Pareto Frontier over SimHash, and is often worse.  Moreover, FDE construction with $k$-means is no longer data oblivious. 
Thus, SimHash is chosen as the preferred method for partitioning for the remainder of our experiments.

\begin{figure}[t]
    \centering
  \includegraphics[width=\linewidth]{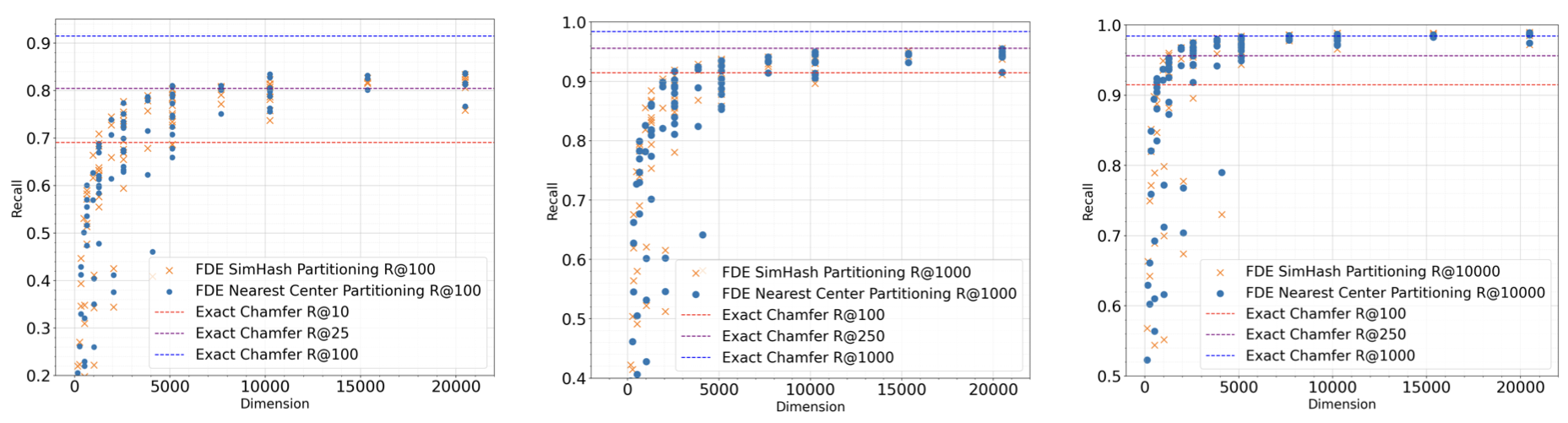}
    \caption{\small FDE recall vs dimension for varying FDE parameters on MS MARCO. Plots show FDE Recall$@$100,1k,10k left to right.  Recalls$@N$ for exact Chamfer scoring is shown by dotted lines. }
         \label{fig:grid_search} 
\end{figure}

\begin{figure}[t]
    \centering
  \includegraphics[width=\linewidth]{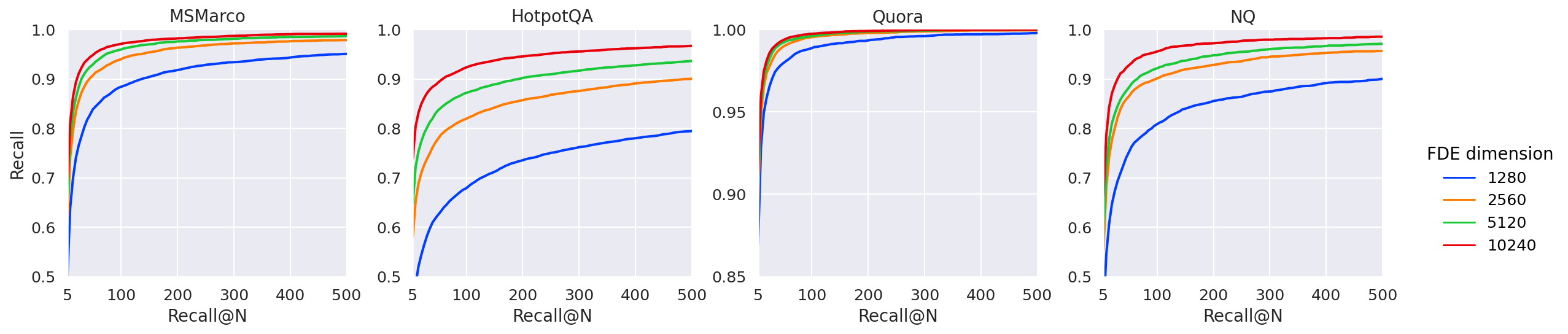}
    \caption{\small Comparison of FDE recall versus brute-force search over Chamfer similarity.}
         \label{fig:chamfer} 
\end{figure}

In Figure \ref{fig:chamfer}, we evaluate the FDE retrieval quality with respect to the Chamfer similarity (instead of labelled ground truth data).  We compute $1$Recall@$N$, which is the fraction of queries for which the Chamfer $1$-nearest neighbor is among the top-$N$ most similar in FDE dot product. We choose FDE parameters which are Pareto optimal for the dimension from the above grid search. We find that FDE's with fewer dimensions that the original MV representations achieve significantly good recall across multiple BEIR retrieval datasets. For instance, on MS MARCO (where $d\cdot m_{avg} \approx 10K$) we achieve $95\%$ recall while retrieving only $75$ candidates using $\dfde = 5120$.

\paragraph{Single Vector Heuristic vs. FDE retrieval.}
We compare the quality of FDEs as a proxy for retrieval against the previously described SV heuristic, which is the method underpinning PLAID.
Recall that in this method, for each of the $i=1,\dots,32$ query vectors $q_i$ we compute the $k$ nearest neighbors $p_{1,i},\dots,p_{k,i}$ from the set $\cup_{i} P_i$ of all document token embeddings. To compute Recall$@N$, 
we create an ordered list $\ell_{1,1},\dots,\ell_{1,32}, \ell_{2,1},\dots$, where $\ell_{i,j}$ is the document ID containing $p_{i,j}$, consisting of the $1$-nearest neighbors of the queries, then the $2$-nearest neighbors, and so on. When re-ranking, one first removes duplicate document IDs from this list. Since duplicates cannot be detected while performing the initial $32$ SV MIPS queries, the SV heuristic needs to over-retrieve to reach a desired number of unique candidates. Thus, we note that the true recall curve of implementations of the SV heuristic (e.g. PLAID) is somewhere between the case of no deduplication and full deduplication; we compare to both in Figure \ref{fig:sv_mv}.

To compare the cost of the SV heuristic to running MIPS over the FDEs, we consider the total number of floats scanned by both using a brute force search. 
The FDE method must scan $n\cdot \dfde$ floats to compute the $k$-nearest neighbors. For the SV heuristic, one runs $32$ brute force scans over $n\cdot m_{avg}$ vectors in $128$ dimensions, where $m_{avg}$ is the average number of embeddings per document (see §\ref{app:datasets} for values of $m_{avg}$). For MS MARCO, where $m_{avg} = 78.8$, the SV heuristic searches through $32 \cdot 128 \cdot 78.8 \cdot n$ floats. This allows for an FDE dimension of $\dfde =$ 322,764 to have comparable cost! We can extend this comparison to fast approximate search -- suppose that approximate MIPS over $n$ vectors can be accomplished in sublinear $n^\eps$ time, for some $\eps \in (0,1)$. Then even in the unrealistic case of $\eps=0$, we can still afford an FDE dimension of $\dfde = 32\cdot 128 = 4096$.

The results can be found in Figure \ref{fig:sv_mv}. We build FDEs once for each dimension, using $\reps = 40, \ksim = 6, \dproj = d = 128$, and then applying a final projection to reduce to the target dimension (see \ref{app:finalproj} for experiments on the impact of final projections).
On MS MARCO, even the $4096$-dimensional FDEs match the recall of the (deduplicated) SV heuristic while retrieving 1.75-3.75$\times$ \emph{fewer} candidates  (our Recall$@N$ matches the Recall$@$1.75-3.75$N$ of the SV heuristic), and 10.5-15$\times$ fewer than to the non-deduplicated SV heuristic. For our $10240$-dimension FDEs, these numbers are 2.6-5$\times$ and 20-22.5$\times$ fewer, respectively. For instance, we achieve $80\%$ recall with $60$ candidates when $\dfde = 10240$ and $80$ candidates when $\dfde = 4096$, but the SV heuristic requires $300$ and $1200$ candidates (for dedup and non-dedup respectively). See Table \ref{table:sv_mv_table} for further comparisons.

\begin{figure}[t]
    \centering
  \includegraphics[width=\linewidth]{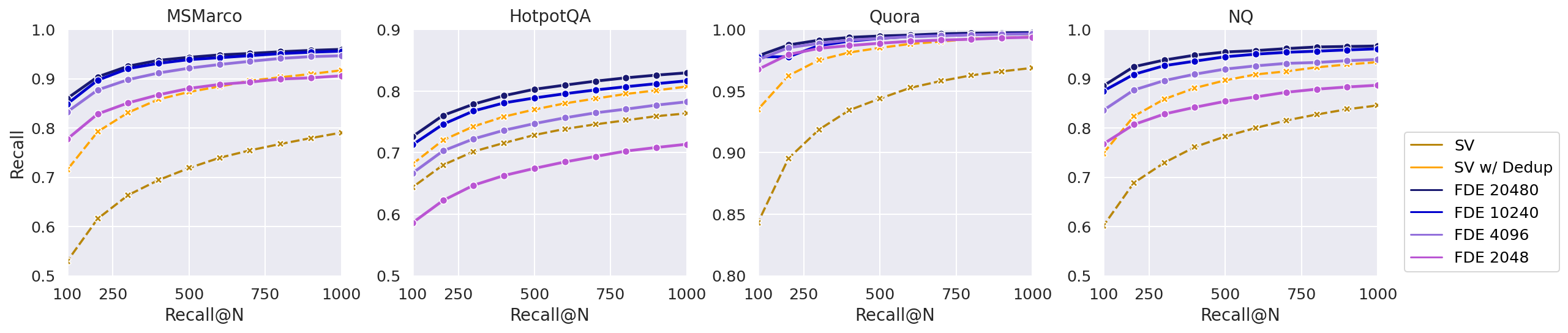}
   \caption{\small FDE retrieval vs SV Heuristic, both with and without document id deduplication. }
         \label{fig:sv_mv} 
\end{figure}

\paragraph{Variance.}Note that although the FDE generation is a randomized process, we show in (§\ref{app:variance}) that the variance of the FDE Recall is essentially negligible; for instance, the standard deviation Recall$@$1000 is at most $0.08$-$0.16\%$ for FDEs with $2$-$10k$ dimensions.

\subsection{Online Implementation and End-to-End Evaluation} \label{sec:online}

We implemented \name{}, an FDE generation and end-to-end retrieval engine in C++. 
%
We discussed FDE generation and various optimizations and their tradeoffs in (§\ref{sec:fde-experimental}). %
Next, we discuss how we perform retrieval over the FDEs, and additional optimizations.

\paragraph{Single-Vector MIPS Retrieval using DiskANN.}
Our single-vector retrieval engine uses a scalable implementation~\cite{parlayann} of DiskANN~\cite{diskann} (MIT License), a state-of-the-art graph-based ANNS algorithm.
We build DiskANN indices by using the uncompressed document FDEs with a maximum degree of 200 and a build beam-width of 600.
Our retrieval works by querying the DiskANN index using beam search with beam-width $W$, and subsequently reranking the retrieved candidates with Chamfer similarity.
The only tuning knob in our system is $W$; increasing $W$ increases the number of candidates retrieved by \name{}, which improves the recall.

\paragraph{Product Quantization (PQ).}
To further improve the memory usage of $\name$, we use a textbook vector compression technique called product quantization (PQ) with asymmetric querying~\cite{jegou2010product, guo2020accelerating} on the FDEs. 
We refer to product quantization with $C$ centers per group of $G$ dimensions as $\mathsf{PQ}\text{-}C\text{-}G$. For example, $\mathsf{PQ}\text{-}256\text{-}8$, which we find to provide the best tradeoff between quality and compression in our experiments, compresses every consecutive set of $8$ dimensions to one of $256$ centers. Thus $\mathsf{PQ}\text{-}256\text{-}8$  provides $32\times$ compression over storing each dimension using a single float, since each block of $8$ floats is represented by a single byte. See (§\ref{apx:PQ}) for further experiments and details on PQ.

\paragraph{Experimental Setup.}
We run our online experiments on an Intel Sapphire Rapids machine on Google Cloud (c3-standard-176). The machine supports up to 176 hyper-threads. 
We run latency experiments using a single thread, and run our QPS experiments on all 176 threads.

\begin{figure*}[h]
\vspace{-1em}
  \subfloat{
    \includegraphics[width=.33\textwidth]{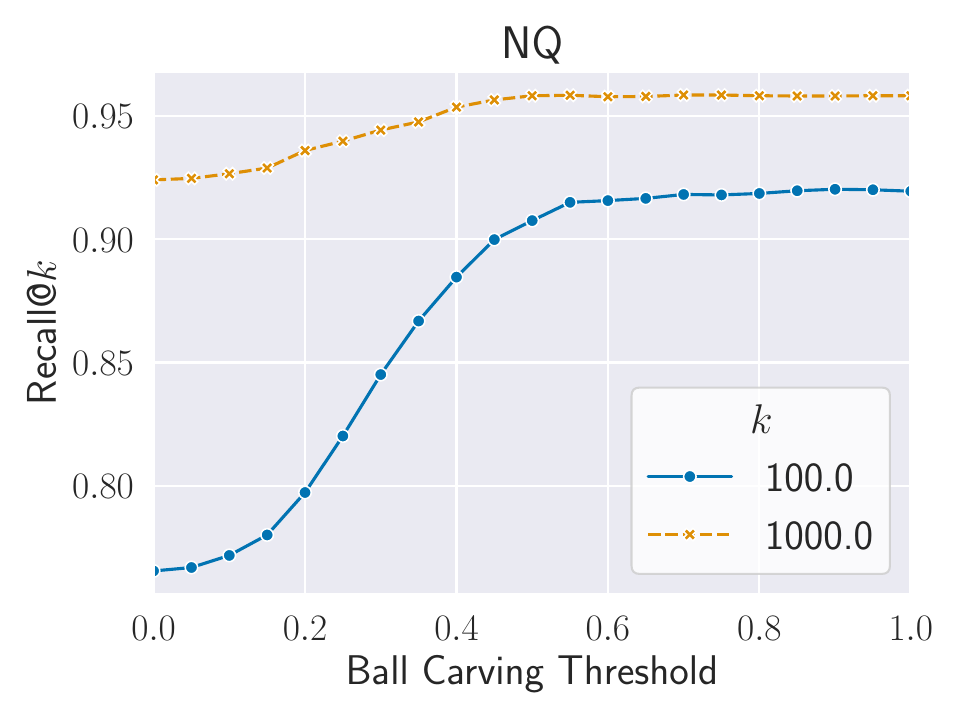}}
  \subfloat{
    \includegraphics[width=.33\textwidth]{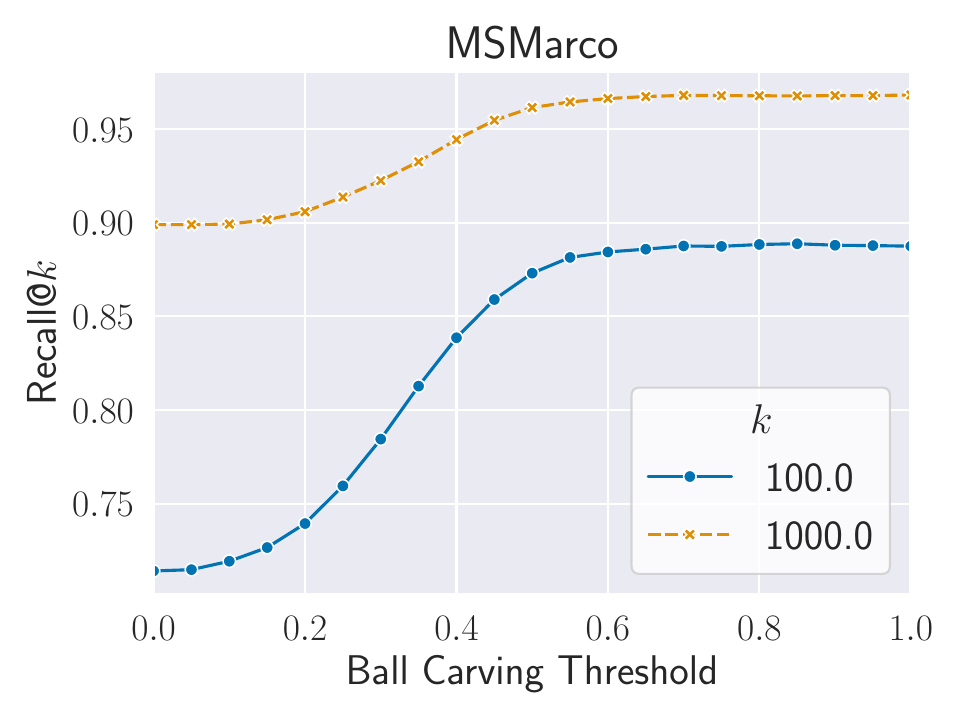}}
  \subfloat{
    \includegraphics[width=.33\textwidth]{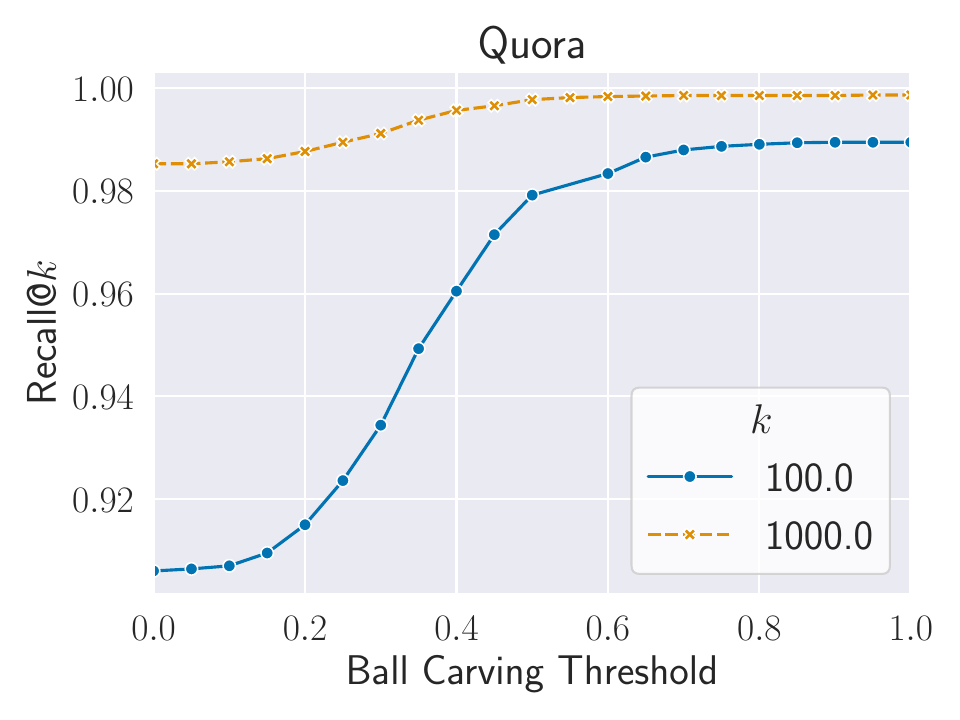}}
\vspace{-1em}
  \caption{\small Plots showing the trade-off between the threshold used for ball carving and the end-to-end recall.}
\label{fig:ballcarve}
\end{figure*}

\paragraph{Ball Carving.} To improve re-ranking speed, we reduce the number of query embeddings by clustering them via a {\em ball carving} method and replacing the embeddings in each cluster with their sum. 
This speeds up reranking without decreasing recall.
%
Specifically, we group the queries $Q$ into clusters  $C_1,\dots,C_{k}$, setting $c_i = \sum_{q\in C_i} q$ and $Q_C = \{c_1,\dots,c_{k}\}$. Then, after retrieving a set of candidate documents with the FDEs, instead of rescoring via $\CH(Q,P)$ for each candidate $P$, we rescore via $\CH(Q_C, P)$, which runs in time $O(|Q_C| \cdot |P|)$, offering speed-ups when the number of clusters is small. 
Instead of fixing $k$, we perform greedy ball-carving to allow $k$ to adapt to $Q$. Specifically, given a threshold $\tau$, we select an arbitrary point $q \in Q$, cluster it with all other points $q' \in Q$ with $\langle q,q'\rangle \geq \tau$, remove the clustered points and repeat until all points are clustered.

In Figure \ref{fig:ballcarve}, we show the the trade-off between end-to-end Recall$@k$ of $\name$ and the ball carving threshold used. Notice that for both $k=100$ and $k=1000$, the recall curves flatten after a threshold of $\tau = 0.6$, and for all datasets they are essentially flat after $\tau \geq 0.7$.  Thus, for such thresholds we incur essentially no quality loss by performing ball carving. Based on these empirical results, we choose the value of $\tau = 0.7$ in our end-to-end experiments.

In (§\ref{apx:ballcarve}), we show the impact on end-to-end QPS of ball carving on the MS MARCO dataset.
For sequential re-ranking, we find that ball carving at a $\tau = 0.7$ threshold provides a $25\%$ QPS improvement, and when re-ranking is being done in parallel (over all cores simultaneously) it yields a 20$\%$ QPS improvement. 
Moreover, with a threshold of $\tau = 0.7$, there were an average of $5.9$ clusters created per query on MS MARCO. 
This reduces the number of query embeddings by 5.4$\times$, down from the initial fixed setting of $|Q| = 32$.  
This finding shows that pre-clustering the queries before re-ranking gives non-trivial runtime improvements with negligible quality loss.  
It also suggests that a fixed setting of $|Q| = 32$ query embeddings used by existing approaches is likely excessive for MV similarity quality, and that fewer queries could achieve a similar performance. 


\begin{figure*}[t]
\vspace{-1em}
  \subfloat{\includegraphics[width=.33\linewidth]{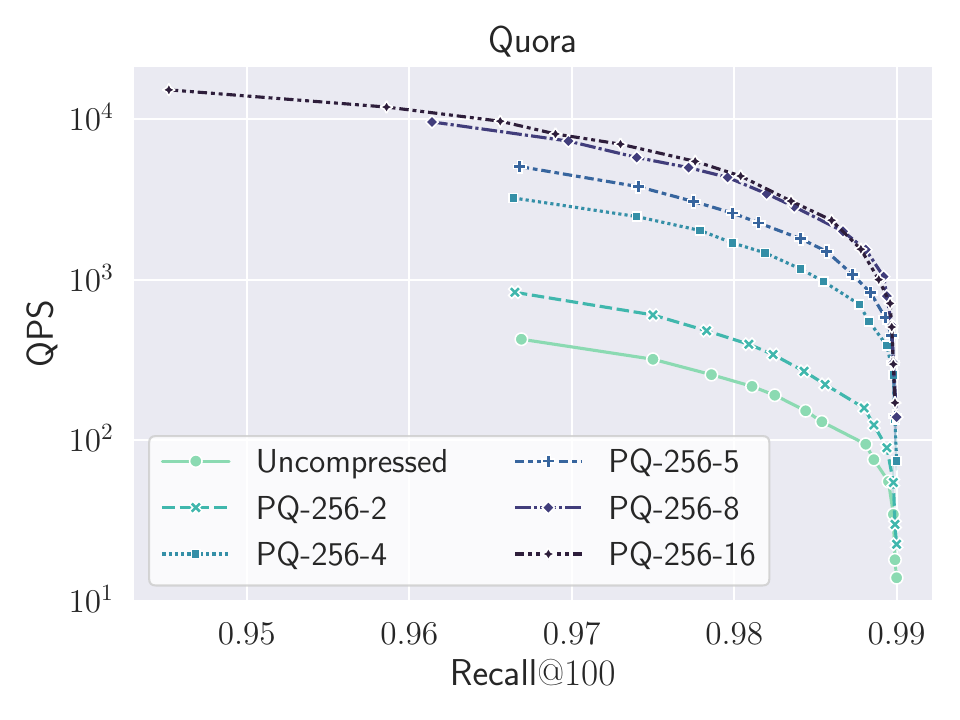}} 
  \subfloat{\includegraphics[width=.33\linewidth]{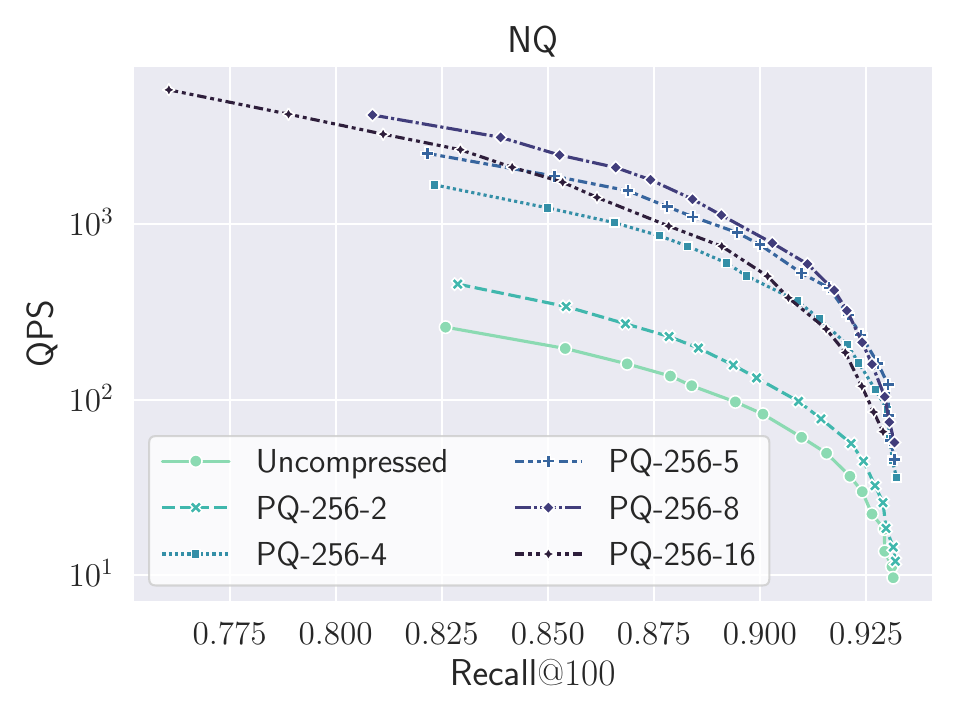}} 
  \subfloat{\includegraphics[width=.33\linewidth]{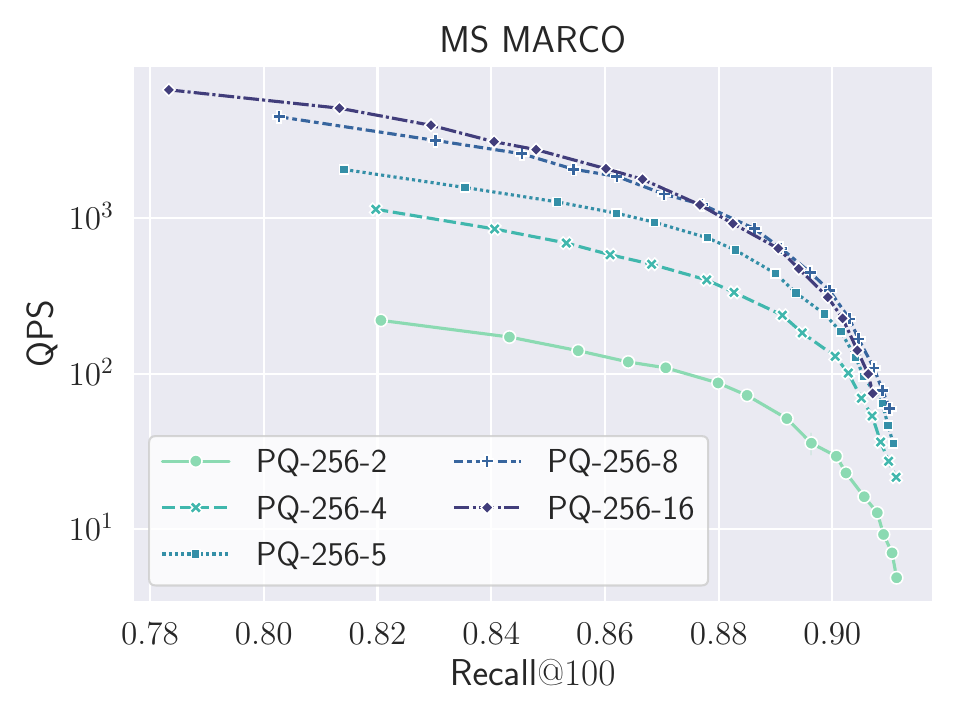}}
  \caption{\small Plots showing the QPS vs. Recall@$100$ for \name{} on a subset of the BEIR datasets. The different curves are obtained by using different PQ methods on 10240-dimensional FDEs.}
\label{fig:qpsvsrecall100}
\end{figure*}
\paragraph{QPS vs. Recall.}
A useful metric for retrieval is the number of {\em queries per second (QPS)} a system can serve at a given recall; evaluating the QPS of a system tries to fully utilize the system resources (e.g., the bandwidth of multiple memory channels and caches), and deployments where machines serve many queries simultaneously. 
Figure~\ref{fig:qpsvsrecall100} shows the QPS vs. Recall@100 for \name{} on a subset of the BEIR datasets, using different PQ schemes over the FDEs. We show results for additional datasets, as well as Recall@1000, in the Appendix.
Using $\mathsf{PQ}\text{-}256\text{-}8$ not only reduces the space usage of the FDEs by 32$\times$, but also improves the QPS at the same query beamwidth by up to 20$\times$, while incurring a minimal loss in end-to-end recall.
Our method has a relatively small dependence on the dataset size, which is consistent with prior studies on graph-based ANNS data structures, since the number of distance comparisons made during beam search grows roughly logarithmically with increasing dataset size~\cite{parlayann, diskann}. 
We tried to include QPS numbers for PLAID~\cite{santhanam2022plaid}, but unfortunately their implementation does not support running multiple queries in parallel, and is optimized for measuring latency.

\begin{figure*}[t]
  \centering
 \vspace{-1.5em}
  \subfloat{
    \includegraphics[width=\linewidth]{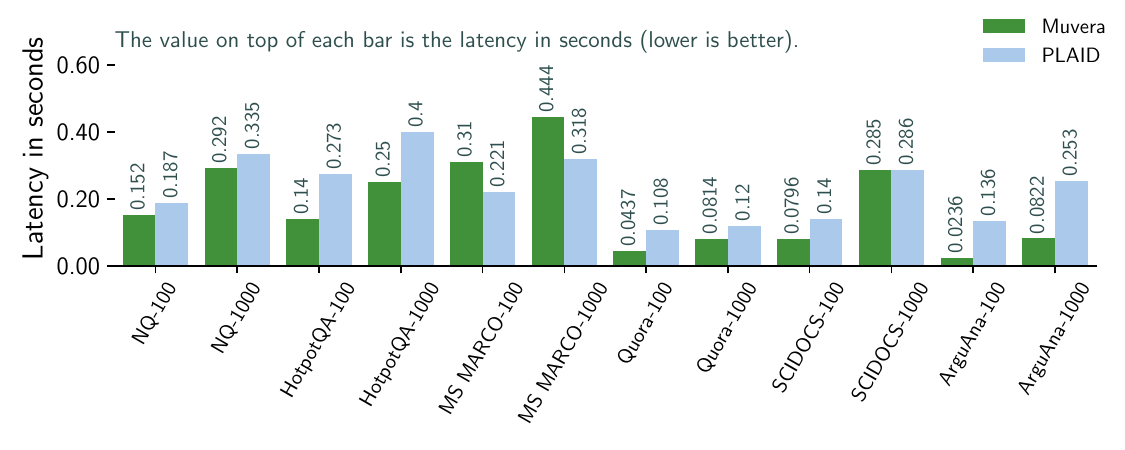}} \\
  \vspace{-1.5em} 
  \subfloat{
    \includegraphics[width=\linewidth]{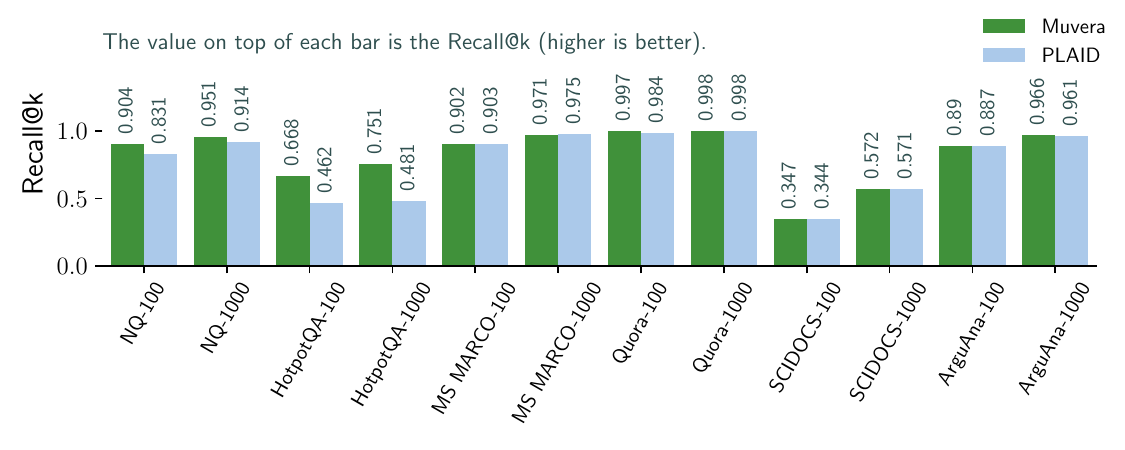}}
\vspace{-1em}
  \caption{\small Bar plots showing the latency and Recall@$k$ of \name{} vs PLAID on a subset of the BEIR datasets. The x-tick labels are formatted as dataset-$k$, i.e., optimizing for Recall@$k$ on the given dataset. \label{fig:vsplaid}}
\end{figure*}
\paragraph{Latency and Recall Results vs. PLAID~\cite{santhanam2022plaid}}
We evaluated \name{} and PLAID~\cite{santhanam2022plaid} on the 6 datasets from the BEIR benchmark described earlier in (§\ref{sec:eval}); Figure~\ref{fig:vsplaid} shows that \name{} achieves essentially equivalent Recall@$k$ as PLAID (within 0.4\%) on MS MARCO, while obtaining up to 1.56$\times$ higher recall on other datasets (e.g. HotpotQA).
We ran PLAID using the recommended settings for their system, which reproduced their recall results for MS MARCO.
Compared with PLAID, on average over all $6$ datasets and $k \in \{100,1000\}$, \name{} achieves 10\% higher  Recall@$k$ (up to 56\% higher), and 90\% lower latency (up to 5.7$\times$ lower).

Importantly, \name{} has consistently high recall and low latency across all of the datasets that we measure, and our method {\em does not} require costly parameter tuning to achieve this---all of our results use the same 10240-dimensional FDEs that are compressed using PQ with $\mathsf{PQ}\text{-}256\text{-}8$; the only tuning in our system was to pick the first query beam-width over the $k$ that we rerank to that obtained recall matching that of PLAID. As Figure~\ref{fig:vsplaid} shows, in cases like NQ and HotpotQA, \name{} obtains much higher recall while obtaining lower latency. Given these results, we believe a distinguishing feature of \name{} compared to prior multi-vector retrieval systems is that it achieves consistently high recall and low latency across a wide variety of datasets with minimal tuning effort.


\section{Conclusion} \label{sec:conclusion}

In this paper, we presented $\name$: a principled and practical MV retrieval algorithm which reduces MV similarity to SV similarity by constructing Fixed Dimensional Encoding (FDEs) of a MV representation. 
We prove that FDE dot products give high-quality approximations to Chamfer similarity (§\ref{sec:fde-theory}). Experimentally, we show that FDEs are a much more effective proxy for MV similarity, since they require retrieving 2-4$\times$ fewer candidates to achieve the same recall as the SV Heuristic (§\ref{sec:fde-experimental}). 
We complement these results with an end-to-end evaluation of $\name$, 
showing that it achieves an average of 10\% improved recall with 90\% lower latency compared with PLAID.
Moreover, despite the extensive optimizations made by PLAID  to the SV Heuristic, we still achieve significantly better latency on $5$ out of $6$ BEIR datasets we consider (§\ref{sec:eval}). 
Given their retrieval efficiency compared to the SV heuristic, we believe that there are still significant gains to be obtained by optimizing the FDE method, and leave further exploration of this to future work.  

\textbf{Broader Impacts and Limitations: }
While retrieval is an important component of LLMs, which themselves have broader societal impacts, these impacts are unlikely to result from our retrieval algorithm. 
Our contribution simply improves the efficiency of retrieval, without enabling any fundamentally new capabilities. As for limitations, while we outperformed PLAID, sometimes significantly, on $5$ out of the $6$ datasets we studied, we did not outperform PLAID on MS MARCO, possibly due to their system having been carefully tuned for MS MARCO given its prevalence. 
Additionally, we did not study the effect that the average number of embeddings $m_{avg}$ per document has on retrieval quality of FDEs; 
this is an interesting direction for future work.

\newpage
\bibliographystyle{plain}
\bibliography{main}
\newpage

\appendix

\section{Missing Proofs from Section \ref{sec:fde-theory}} \label{app:theory}
In this section, we provide the missing proofs in Section \ref{sec:fde-theory}. For convenience, we also reproduce theorem statements as they appear in the main text before the proofs. We begin by analyzing the runtime to compute query and document FDEs, as well as the sparsity of the queries. 
\begin{lemma}\label{lem:runtime}
For any FDE parameters $\ksim,\dproj,\reps \geq$ and sets $Q,P \subset \R^d$, we can compute $\fdeq(Q)$ in time $T_q : = O(\reps |Q|d ( \dproj + \ksim ))$, and $\fded(P)$ in time $O(T_q + \reps |P| 2^{\ksim} \ksim)$. Moreover, $\fdeq(Q)$ has at most $O(|Q|\dproj \reps)$ non-zero entries. 
\end{lemma}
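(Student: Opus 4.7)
}
The plan is to analyze the FDE construction step by step as described in Section \ref{sec:fde}, bounding the cost of one repetition and then multiplying by $\reps$. Fix a single repetition with partition $\bvarphi$ and inner projection $\bpsi$.

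First I would handle the \emph{partitioning} cost: for each input vector $x \in Q$ (resp.\ $x \in P$), evaluating $\bvarphi(x)$ via SimHash requires $\ksim$ inner products against random Gaussians in $\R^d$, costing $O(\ksim d)$ per vector, hence $O(|Q|\ksim d)$ (resp.\ $O(|P|\ksim d)$) per repetition. Next, I would handle the \emph{aggregation} step defined in (\ref{eqn:define-FDE}): accumulating each $q \in Q$ (or $p \in P$) into its assigned $d$-dimensional bucket sum costs $O(d)$ per vector, so $O(|Q|d)$ and $O(|P|d)$ respectively; for documents, the subsequent per-bucket normalization by $|P \cap \bvarphi^{-1}(k)|$ is absorbed. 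Finally, the \emph{inner projection} $\bpsi$ is applied only to the non-empty buckets, of which there are at most $|Q|$ for queries and at most $|P|$ for documents; each application of a $\dproj \times d$ $\pm 1$-matrix costs $O(\dproj d)$, giving $O(|Q|\dproj d)$ and $O(|P|\dproj d)$ respectively. Summing these contributions across $\reps$ repetitions yields $T_q = O(\reps |Q| d (\dproj + \ksim))$, with the analogous bound in $|P|$ for the document side of the main computation.

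The remaining cost in the document bound comes from the \texttt{fill\_empty\_clusters} step. For each repetition I would iterate over the (at most) $2^{\ksim}$ empty clusters $k$ and, for each, scan the precomputed list of SimHash codes $\{\bvarphi(p)\}_{p \in P}$, computing the $\ksim$-bit Hamming distance against $k$ and taking the argmin. This costs $O(|P| \ksim)$ per cluster and $O(|P| 2^{\ksim} \ksim)$ per repetition, hence $O(\reps |P| 2^{\ksim} \ksim)$ in total, which is the additive term in the document bound. Adding this to the aggregation/projection cost above gives the stated document runtime.

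For sparsity, I would simply observe that queries do \emph{not} use \texttt{fill\_empty\_clusters}, so within a given repetition the block $\vec{q}_{(k),\bpsi}$ is nonzero only when some $q \in Q$ satisfies $\bvarphi(q)=k$. There are at most $|Q|$ such distinct $k$, each contributing a block of dimension $\dproj$, so at most $|Q|\dproj$ coordinates of $\vec{q}_{i,\bpsi}$ are nonzero. Summing over the $\reps$ repetitions yields at most $O(|Q|\dproj \reps)$ nonzero entries in $\fdeq(Q)$, as claimed. The argument is essentially bookkeeping with no real technical obstacle; the only point requiring a little care is to charge the inner projection only against non-empty buckets (otherwise one would pay an unnecessary $2^{\ksim}\dproj d$ term), and to implement the empty-cluster search naively rather than attempting a more clever data structure, since the stated bound already permits a brute-force scan.
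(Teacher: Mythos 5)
Your proposal is correct and follows essentially the same accounting as the paper's proof: $O(d\ksim)$ per vector for SimHash, $O(d\dproj)$ per vector (or per non-empty bucket) for the inner projection, a brute-force $O(|P|\ksim)$ Hamming-distance scan per empty cluster for \texttt{fill\_empty\_clusters}, and the sparsity bound from at most $|Q|$ non-empty blocks of size $\dproj$ per repetition. The only cosmetic difference is that you apply $\bpsi$ to the at most $|Q|$ (resp.\ $|P|$) non-empty bucket sums while the paper projects each input vector before aggregation; both give the same $O(\reps|Q|d(\dproj+\ksim))$ bound.
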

\begin{proof}
We first consider the queries. To generate the queries, we must first project each of the $|Q|$ queries via the inner random linear productions $\bpsi_i:\R^{d} \to \R^{\dproj}$, which requires $O(|Q| d \dproj \reps )$ time to perform the matrix-query products for all repetitions. Next, we must compute $\varphi_i(q)$ for each $q \in Q$ and repetition $i \in [\reps]$, Each such value can be computed in $d \cdot \ksim$ time to multiply the $q \in \R^d$ by the $\ksim$ Gaussian vectors. Thus the total running time for this step is $O(\reps |Q| d \ksim)$. Finally, summing the relevant values into the FDE once $\varphi_i(q),\bpsi_i(q)$ are computed can be done in $O(|Q|\dproj)$ time. For sparsity, note that only the coordinate blocks in the FDE corresponding to clusters $k$ in a repetition $i$ with at least one $q \in Q$ with $\bvarphi_i(q) = k$ are non-zero, and there can be at most $O(\reps |Q|)$ of these blocks, each of which has $O(\dproj)$ coordinates.

The document runtime is similar, except with the additional complexity required to carry out the \texttt{fill\_empty\_clusters} option. For each repetition, the runtime required to find the closest $p \in P$ to a given cluster $k$ is $O(|P| \cdot \ksim)$, since we need to run over all $|P|$ values of $\bvarphi(p)$ and check how many bits disagree with $k$. Thus, the total runtime is $O(\reps |P| B \ksim) = O(\reps |P| 2^{\ksim} \ksim)$.
\end{proof}

In what follows, we will need the following standard fact that random projections approximately preserve dot products. The proof is relatively standard, and can be found in \cite{arriaga2006algorithmic}, or see results on approximate matrix product~\cite{woodruff2014sketching} for more general bounds. 
\begin{fact}[\cite{arriaga2006algorithmic}]\label{fact:JL}
Fix $\eps,\delta>0$. For any $d \geq 1$ and $x,y \in \R^d$, let $S \in \R^{t \times d}$ be a matrix of independent entries distributed uniformly over $\{1/\sqrt{t},-1/\sqrt{t}\}$, where $t= O(1/\eps^2 \cdot \log \delta^{-1})$. Then we have $\ex{\langle Sx, Sy \rangle} = \langle x, y \rangle$, and moreover with probability at least $1-\delta$ we have \[|\langle Sx, Sy \rangle - \langle x, y \rangle| \leq \eps  \|x\|_2 \|y\|_2\]
\end{fact}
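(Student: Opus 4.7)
The plan is to reduce the inner-product preservation claim to the more standard norm-preservation version of Johnson--Lindenstrauss for $\pm 1$ random matrices, via the polarization identity. Concretely, I will first verify the expectation equality by a direct second-moment calculation, then use polarization plus a union bound to conclude the high-probability bound.

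\textbf{Step 1 (Expectation).} Assuming the normalization $S_{ij} = \sigma_{ij}/\sqrt{t}$ with $\sigma_{ij}$ i.i.d.\ uniform on $\{\pm 1\}$, I would write
\[
\langle Sx, Sy \rangle \;=\; \frac{1}{t}\sum_{i=1}^t \Bigl(\sum_{j} \sigma_{ij} x_j\Bigr)\Bigl(\sum_{k} \sigma_{ik} y_k\Bigr)
\;=\; \frac{1}{t}\sum_{i=1}^t \sum_{j,k}\sigma_{ij}\sigma_{ik} x_j y_k.
\]
Since $\E[\sigma_{ij}\sigma_{ik}] = \mathbf{1}[j=k]$, the diagonal terms give $\sum_j x_j y_j = \langle x, y\rangle$ inside the average, so $\E[\langle Sx, Sy\rangle] = \langle x, y\rangle$.

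\textbf{Step 2 (Reduction via polarization).} I would use the identity
\[
\langle Sx, Sy\rangle - \langle x, y\rangle \;=\; \tfrac{1}{4}\Bigl(\|S(x+y)\|_2^2 - \|x+y\|_2^2\Bigr) \;-\; \tfrac{1}{4}\Bigl(\|S(x-y)\|_2^2 - \|x-y\|_2^2\Bigr)
\]
and then invoke the standard norm-preservation guarantee for $\pm 1$ matrices: there is a constant $c$ such that for any fixed $u \in \R^d$ and $t \geq c\eps^{-2}\log(1/\delta')$,
\[
\Pr\Bigl[\,\bigl|\,\|Su\|_2^2 - \|u\|_2^2\,\bigr| > \eps \|u\|_2^2\,\Bigr] \;\leq\; \delta'.
\]
Applying this once to $u = x+y$ and once to $u = x-y$, with $\delta' = \delta/2$ in each, and taking a union bound, I get both norm estimates simultaneously with probability $\geq 1-\delta$.

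\textbf{Step 3 (Combining).} Triangle inequality then yields
\[
|\langle Sx, Sy\rangle - \langle x,y\rangle| \;\leq\; \tfrac{\eps}{4}\bigl(\|x+y\|_2^2 + \|x-y\|_2^2\bigr) \;=\; \tfrac{\eps}{2}\bigl(\|x\|_2^2 + \|y\|_2^2\bigr),
\]
using the parallelogram identity. To obtain the multiplicative form $\eps\|x\|_2\|y\|_2$, I would reduce to the unit-vector case by scaling: apply the above to $\tilde x = x/\|x\|_2$ and $\tilde y = y/\|y\|_2$ with error parameter $\eps$, which gives $|\langle S\tilde x, S\tilde y\rangle - \langle \tilde x,\tilde y\rangle| \leq \eps$, and then multiply through by $\|x\|_2\|y\|_2$.

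\textbf{Main obstacle.} The only nontrivial ingredient is the single-vector norm-preservation bound for $\pm 1$ matrices in Step 2. I would establish this by expanding $\|Su\|_2^2 = \frac{1}{t}\sum_i (\sum_j \sigma_{ij}u_j)^2$ as an average of i.i.d.\ sub-exponential random variables centered at $\|u\|_2^2$, then apply Bernstein's inequality (or equivalently a direct MGF computation using $\E[e^{\lambda \sigma_{ij}\sigma_{ik} u_j u_k}] \leq e^{\lambda^2 u_j^2 u_k^2 / 2}$) to obtain the claimed $\exp(-\Omega(\eps^2 t))$ tail. This is where all the work is; everything else is bookkeeping. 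Alternatively one could cite Achlioptas's original analysis of $\pm 1$-valued JL matrices directly.
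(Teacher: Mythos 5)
Your proposal is correct, but note that the paper does not actually prove this statement at all: it is imported as Fact~\ref{fact:JL} with a citation to Arriaga--Vempala \cite{arriaga2006algorithmic} (and a pointer to approximate matrix product bounds \cite{woodruff2014sketching}), so there is no in-paper argument to match against. Your route --- polarization $\langle Sx,Sy\rangle-\langle x,y\rangle=\tfrac14(\|S(x+y)\|_2^2-\|x+y\|_2^2)-\tfrac14(\|S(x-y)\|_2^2-\|x-y\|_2^2)$, the distributional JL norm bound for Rademacher matrices applied to $x+y$ and $x-y$ with a union bound, the parallelogram identity, and rescaling to unit vectors to convert $\tfrac{\eps}{2}(\|x\|_2^2+\|y\|_2^2)$ into $\eps\|x\|_2\|y\|_2$ --- is the standard textbook derivation and is sound; it is essentially the argument underlying the cited works, so you have made explicit what the paper leaves as a black box. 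Two small caveats. First, you correctly insert the $1/\sqrt{t}$ normalization ($S_{ij}=\sigma_{ij}/\sqrt{t}$); the Fact as literally stated omits it, but the paper's own use defines $\bpsi(x)=(1/\sqrt{\dproj})Sx$, so your reading is the intended one (without it, $\ex{\langle Sx,Sy\rangle}=t\langle x,y\rangle$). Second, in your ``main obstacle'' paragraph the parenthetical suggestion to bound $\E[e^{\lambda\sigma_{ij}\sigma_{ik}u_ju_k}]$ termwise and multiply is not justified as written, since the cross terms $\{\sigma_{ij}\sigma_{ik}\}_{j<k}$ within a row are not mutually independent, so the MGF of their sum does not factor; your primary route is fine, however, because $\sum_j\sigma_{ij}u_j$ is sub-Gaussian with proxy $\|u\|_2^2$ by Hoeffding's lemma, hence its square is sub-exponential and Bernstein over the $t$ independent rows gives the $\exp(-\Omega(\eps^2t))$ tail; citing Achlioptas's analysis, as you suggest, also suffices.
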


To analyze the approximations of our FDEs, we begin by proving an upper bound on the value of the FDE dot product. In fact, we prove a stronger result: we show that our FDEs have the desirable property of being \emph{one-sided estimators} -- namely, they never overestimate the true Chamfer similarity. This is summarized in the following Lemma.
\begin{lemma}[One-Sided Error Estimator]\label{lem:oneside}
Fix any sets $Q,P \subset \R^d$ of unit vectors with $|Q| + |P| = m$. Then if $d=\dproj$, we always have 
 \[\frac{1}{|Q|} \left\langle  \fdeq(Q) , \fded(P) \right\rangle \leq  \nCH(Q,P) \]
 Furthermore, for any $\delta >0$, if we set $\dproj = O(\frac{1}{\eps^2} \log (m/\delta))$, then we have  $\frac{1}{|Q|} \langle \fdeq(Q) , \fded(P) \rangle \leq  \nCH(Q,P) + \eps$ in expectation and with probability at least $1-\delta$. 
\end{lemma}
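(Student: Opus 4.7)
The plan is to separate the two parts of the lemma and handle the randomness only in the second part. I will establish the one-sided (deterministic) inequality using the explicit formula (\ref{eqn:chamfer-approx}), then bootstrap to the projected case via the Johnson--Lindenstrauss-style bound in Fact \ref{fact:JL}.

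For the first claim, when $d = \dproj$ the inner projection $\bpsi$ is the identity and $\fdeq, \fded$ are defined directly by (\ref{eqn:define-FDE}). Expanding the dot product block-by-block gives the formula (\ref{eqn:chamfer-approx}). For each query point $q\in Q$ lying in cluster $k=\bvarphi(q)$, its contribution to $\langle\vec{q},\vec{p}\rangle$ is exactly the \emph{average} of $\langle q,p\rangle$ over $p \in P \cap \bvarphi^{-1}(k)$ whenever this set is nonempty, and under \texttt{fill\_empty\_clusters} it equals $\langle q, p^{*}\rangle$ for some single $p^{*}\in P$ when the set is empty. In both cases the contribution is bounded above by $\max_{p \in P}\langle q,p\rangle$: an average of real numbers is at most their max, and $p^{*}\in P$ so $\langle q,p^{*}\rangle\leq\max_{p\in P}\langle q,p\rangle$. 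Summing over $q\in Q$ and dividing by $|Q|$ yields exactly $\nCH(Q,P)$.

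For the second claim, I would argue block-by-block using Fact \ref{fact:JL}. By Fact \ref{fact:JL}, $\E[\langle \bpsi(\vec{q}_{(k)}),\bpsi(\vec{p}_{(k)})\rangle] = \langle \vec{q}_{(k)},\vec{p}_{(k)}\rangle$, so by linearity of expectation the expected projected inner product equals the unprojected inner product of the first claim, giving the expectation bound. For the high-probability bound, note that only clusters $k$ with $Q\cap\bvarphi^{-1}(k)\neq\emptyset$ contribute non-trivially to $\langle\fdeq(Q),\fded(P)\rangle$, and there are at most $|Q|\leq m$ such clusters. Applying Fact \ref{fact:JL} with failure probability $\delta/m$ to each such $(\vec{q}_{(k)},\vec{p}_{(k)})$ pair requires $\dproj = O(\eps^{-2}\log(m/\delta))$ dimensions; a union bound makes all these estimates simultaneously valid with probability at least $1-\delta$. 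Each block then contributes additive error at most $\eps\|\vec{q}_{(k)}\|_2\|\vec{p}_{(k)}\|_2$.

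It remains to bound the total additive error. Because $p$'s are unit vectors, $\|\vec{p}_{(k)}\|_2\leq 1$ (a centroid of unit vectors, or a single unit vector under \texttt{fill\_empty\_clusters}). Because $\vec{q}_{(k)}$ is a sum of unit vectors, the triangle inequality gives $\|\vec{q}_{(k)}\|_2\leq|Q\cap\bvarphi^{-1}(k)|$. Summing over clusters yields $\sum_k\|\vec{q}_{(k)}\|_2\|\vec{p}_{(k)}\|_2\leq\sum_k|Q\cap\bvarphi^{-1}(k)|=|Q|$, so the total additive error is at most $\eps|Q|$. Combining this with the first claim (applied to the unprojected dot product, which the projected dot product exceeds by at most $\eps|Q|$) and dividing by $|Q|$ yields the $\eps$ additive upper bound. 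The only subtle step is the union-bound accounting and verifying that ``only $O(m)$ cluster blocks matter,'' since naively one might pay $\log B = \Omega(\ksim)$ per block; the key is that blocks with $Q\cap\bvarphi^{-1}(k)=\emptyset$ contribute zero regardless of projection error, so they need not be controlled.
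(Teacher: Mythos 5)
Your proposal is correct, and both halves track the paper's own proof closely: the deterministic bound is the same average-versus-maximum argument (and you are in fact a bit more explicit than the paper in checking the \texttt{fill\_empty\_clusters} case, where the document block is a single $p^{*}\in P$ and the same bound holds). The only genuine difference is in how the random projection is handled in the second half. The paper applies Fact \ref{fact:JL} to the original token pairs and union bounds over all $|Q|\cdot|P|\le m^2$ pairs $(q,p)$, so that every term $\langle \bpsi(q),\bpsi(p)\rangle$ appearing in the expansion of the FDE dot product is within $\eps$ of $\langle q,p\rangle$, and the average-versus-maximum argument is then rerun with the inflated terms. You instead apply Fact \ref{fact:JL} at the level of the aggregated blocks $(\vec{q}_{(k)},\vec{p}_{(k)})$, union bound over the at most $|Q|\le m$ clusters containing a query point (correctly noting that blocks with no query point are identically zero after the linear map $\bpsi$ and so need no control), and pay $\eps\,\|\vec{q}_{(k)}\|_2\|\vec{p}_{(k)}\|_2$ per block; the bookkeeping $\|\vec{p}_{(k)}\|_2\le 1$ and $\sum_k\|\vec{q}_{(k)}\|_2\le |Q|$ then caps the total error at $\eps|Q|$, after which you invoke the deterministic half for the unprojected product. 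Both routes require the same $\dproj=O(\eps^{-2}\log(m/\delta))$ and both implicitly use that $\bpsi$ is independent of the partition $\bvarphi$ (so the blocks are fixed vectors when Fact \ref{fact:JL} is applied); yours compares the projected FDE directly to the unprojected FDE, which is a slightly more modular statement, while the paper's pair-level union bound avoids any norm bookkeeping on the blocks. Either argument is complete.
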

\begin{proof}

First claim simply follows from the fact that the average of a subset of a set of numbers can't be bigger than the maximum number in that set. More formally, we have: 
\begin{equation}
    \begin{split}
    \frac{1}{|Q|}  \left\langle \fdeq(Q) , \fded(P) \right\rangle  
     & = \frac{1}{|Q|}\sum_{k=1}^{\buckets} \sum_{\substack{q \in Q\\ \bvarphi(q) = k}}  \frac{1}{|P \cap \bvarphi^{-1}(k)|} \sum_{\substack{p \in P \\ \bvarphi(p)=k}} \langle q,p\rangle \\
     &\leq \frac{1}{|Q|}\sum_{k=1}^{\buckets} \sum_{\substack{q \in Q \\ \bvarphi(q) = k}}  \frac{1}{|P \cap \bvarphi^{-1}(k)|} \sum_{\substack{p \in P \\ \bvarphi(p)=k}}  \max_{p' \in P} \langle q , p'\rangle \\
       &= \frac{1}{|Q|}\sum_{k=1}^{\buckets} \sum_{\substack{q \in Q \\ \bvarphi(q) = k}} \max_{p' \in P}  \langle q,p\rangle  = \nCH(Q,P) \\
    \end{split}
\end{equation}
  
    Which completes the first part of the lemma. For the second part, to analyze the case of $\dproj<d$, when inner random projections are used, by applying Fact \ref{fact:JL},  firstly we have $\ex{\langle \bpsi(p), \bpsi(q) } = \langle q, p  \rangle$ for any $q \in Q, p \in P,$, and secondly, after a union bound over $|P|\cdot|Q| \leq m^2$ pairs, we have $\langle q, p  \rangle =  \langle \bpsi(p), \bpsi(q) \rangle \pm \eps$ simultaneously for all $q \in Q, p \in P,$ with probability $1-\delta$, for any constant $C>1$.  The second part of the Lemma then follows similarly as above. 
\end{proof}

We are now ready to give the proof of our main FDE approximation theorem. 

\textbf{Theorem \ref{thm:FDE-approx} }(FDE Approximation). {\it
 Fix any $\eps ,\delta > 0$, and sets $Q,P \subset \R^d$ of unit vectors, and let $m=|Q| + |P|$. 
Then setting $\ksim = O\left(\frac{\log (m\delta^{-1}\eps^{-1})}{\epsilon^2}\right)$, $\dproj = O\left(\frac{1}{\eps^2}  \log (\frac{m}{\eps\delta})\right)$, $\reps = 1$, so that $\dfde = (\frac{m}{\eps \delta})^{O(1/\eps^2)}$, 
we have
 \[  \nCH(Q,P)  - \eps \leq \frac{1}{|Q|}\langle \fdeq(Q) , \fded(P) \rangle \leq  \nCH(Q,P) + \eps  \]
 in expectation, and with probability at least $1-\delta$. 
}
\begin{proof}[Proof of Theorem \ref{thm:FDE-approx}]

The upper bound follows from Lemma \ref{lem:oneside}, so it will suffice to prove the lower bound. 
We first prove the result in the case when there are no random projections $\bpsi$, and remove this assumption at the end of the proof.
 Note that, by construction, $\fdeq$ is a linear mapping so that $\fdeq(Q) = \sum_{q \in Q} \fde(q)$, thus
       \[ \langle \fdeq(Q) , \fded(P) \rangle = \sum_{q \in Q} \langle \fdeq(q) , \fded(P) \rangle \]
So it will suffice to prove that 
\begin{equation}\label{eqn:lem2-goal}
 \pr{ \langle \fdeq(q) , \fded(P) \rangle \geq  \max_{p \in P} \langle q,p\rangle  - \eps} \geq 1-\eps \delta/|Q|
\end{equation}
 for all $q \in Q$, since then, by a union bound \ref{eqn:lem2-goal} will hold for all $q \in Q$ with probability at least  $1-\eps \delta$, in which case we will have 
 \begin{equation}
     \begin{split}
         \frac{1}{|Q|}\langle \fdeq(Q) , \fded(P) \rangle &\geq \frac{1}{|Q|}\sum_{q \in Q}\left( \max_{p \in P} \langle q,p\rangle -\eps \right)  \\
         & = \nCH(Q,P) - \eps
     \end{split}
 \end{equation}
which will complete the theorem.

 In what follows, for any $x,y \in \R^d$ let $\theta(x,y) \in [0, \pi]$ be the angle between $x,y$.  Now fix any $q \in Q$, and let $p^* = \arg \max_{p \in P} \langle q,p\rangle$, and let $\theta^* = \theta(q,p^*)$.  By construction, there always exists some set of points $S \subset P$ such that 
\[\langle \fdeq(q) , \fded(P) \rangle  =\left\langle q, \frac{1}{|S|} \sum_{p \in S} p \right\rangle \] 
Moreover, the RHS of the above equation is always bounded by $1$ in magnitude, since it is an average of dot products of normalized vectors $q,p \in \mathcal{S}^{d-1}$. In particular, there are two cases. In case \textbf{(A)} $S$ is the set of points $p$ with $\bvarphi(p) = \bvarphi(q)$, and in case \textbf{(B)} $S$ is the single point $\arg \min_{p \in P} \|\bvarphi(p) - \bvarphi(q)\|_0$, where $\|x-y\|_0$ denotes the hamming distance between any two bit-strings $x,y \in \{0,1\}^{\ksim}$, and we are interpreting $\bvarphi(p), \bvarphi(q)\in \{0,1\}^{\ksim}$ as such bit-strings. Also let $g_1,\dots,g_{\ksim} \in \R^d$ be the random Gaussian vectors that were drawn to define the partition function $\bvarphi$.  To analyze $S$, we first prove the following:

\begin{claim}\label{claim:inner1}
 For any $q \in Q$ and $p \in P$, we have 
 \[\pr{\left| \|\bvarphi(p) - \bvarphi(q)\|_0 -  \ksim \cdot \frac{\theta(q,p)}{\pi} \right|  >  \eps \ksim} \leq \left(\frac{\eps\delta }{m^2}\right) \]
\end{claim}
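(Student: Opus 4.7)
The plan is to recognize $\|\bvarphi(p) - \bvarphi(q)\|_0$ as a sum of $\ksim$ independent Bernoulli indicators with a mean given by the classical SimHash collision probability, and then finish with a standard Chernoff/Hoeffding tail bound. I do not expect any real obstacle here; the argument is essentially textbook, and the only thing to verify at the end is that the choice of $\ksim$ prescribed by Theorem~\ref{thm:FDE-approx} is large enough to drive the failure probability down to $\eps\delta/m^2$.

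Concretely, I would first write
\[ \|\bvarphi(p) - \bvarphi(q)\|_0 \;=\; \sum_{i=1}^{\ksim} X_i, \qquad X_i \;:=\; \mathbf{1}\!\bigl(\mathbf{1}(\langle g_i,q\rangle>0) \neq \mathbf{1}(\langle g_i,p\rangle>0)\bigr), \]
where $g_1,\dots,g_{\ksim} \in \R^d$ are the independent standard Gaussian vectors used to define $\bvarphi$. The $X_i$ are mutually independent since the $g_i$ are independent, each takes values in $\{0,1\}$, and by the rotational invariance of the Gaussian the joint distribution of $(\mathrm{sign}\langle g_i,q\rangle,\mathrm{sign}\langle g_i,p\rangle)$ depends only on the $2$-plane spanned by $q,p$. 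The standard Goemans--Williamson / SimHash calculation then yields
\[ \ex{X_i} \;=\; \Pr[X_i = 1] \;=\; \frac{\theta(q,p)}{\pi}, \]
so that $\ex{\sum_i X_i} = \ksim \cdot \theta(q,p)/\pi$, matching the quantity in the claim.

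Next I would apply Hoeffding's inequality to the bounded independent $X_i$'s to get
\[ \Pr\!\left[\, \left|\sum_{i=1}^{\ksim} X_i \;-\; \ksim \cdot \frac{\theta(q,p)}{\pi}\right| \,>\, \sqrt{\eps}\,\ksim \,\right] \;\leq\; 2\exp(-2\eps\, \ksim). \]
Finally, I would plug in $\ksim = C \cdot \log(m\delta^{-1})/\eps$ as in Theorem~\ref{thm:FDE-approx} and choose the hidden constant $C$ large enough that $2\exp(-2\eps\, \ksim) \leq \eps\delta/m^2$. This amounts to the inequality $\ksim \geq \log(2m^2/(\eps\delta))/(2\eps)$, which is comfortably satisfied for the stated $\ksim$ (absorbing the $\log(1/\eps)$ term into the $O(\cdot)$), completing the claim.
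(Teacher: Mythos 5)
Your proposal is correct and follows essentially the same route as the paper: write $\|\bvarphi(p)-\bvarphi(q)\|_0$ as a sum of independent Bernoulli indicators with mean $\theta(q,p)/\pi$ (by rotational invariance of the Gaussians), apply Hoeffding, and choose $\ksim = O(\log(m\delta^{-1})/\eps)$ so the tail is at most $\eps\delta/m^2$. Your explicit factor of $2$ in the Hoeffding bound is the only cosmetic difference and is absorbed into the constant, exactly as you note.
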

\begin{proof}
Fix any such $p$, and for $i \in [\ksim]$ let $Z_i$ be an indicator random variable that indicates the event that $ \mathbf{1}(\langle g_i , p\rangle > 0) \neq \mathbf{1}(\langle g_i , q\rangle > 0)$. 
First then note that $\|\bvarphi(p) - \bvarphi(q)\|_0 =\sum_{i=1}^{\ksim} Z_i$. Now by rotational invariance of Gaussians, for a Gaussian vector $g \in \R^d$  we have $\pr{\mathbf{1}(\langle g, x\rangle > 0) \neq \mathbf{1}(\langle g, y\rangle> 0)} = \frac{\theta(x,y)}{\pi}$ for any two vectors $x,y \in \R^d$. It follows that $Z_i$ is a Bernoulli random variable with $\ex{Z_i} = \frac{\theta(x,y)}{\pi}$. By a simple application of Hoeffding's inequality, we have

\begin{equation}
    \begin{split}
        \pr{\left| \|\bvarphi(p) - \bvarphi(q)\|_0 -  \ksim \cdot \frac{\theta(q,p)}{\pi} \right|  >  \eps \ksim} &=  \pr{\left| \sum_{i=1}^{\ksim} Z_i -  \ex{\sum_{i=1}^{\ksim} Z_i}\right|  >  \eps \ksim}  \\
        & \leq 2\exp\left(-2\eps^2 \ksim\right)\\
        & \leq    \left(\frac{\eps\delta }{m^2}\right)
    \end{split}
\end{equation}
where we took $\ksim \geq  \log(\frac{m^2}{\eps \delta})/\eps^2$, which completes the proof.
\end{proof}
We now condition on the event in Claim \ref{claim:inner1} occurring for all $p \in P$, which holds with probability at least $1- |P|\cdot \left(\frac{\eps\delta }{m^2}\right) > 1-  \left(\frac{\eps\delta }{m}\right)$ by a union bound. Call this event $\mathcal{E}$, and condition on it in what follows.

Now  suppose that we are in case \textbf{(B)}, and the set $S$ of points which map to the cluster $\bvarphi(q)$ is given by $S = \{p'\}$ where $p' = \arg \min_{p \in P} \|\bvarphi(p) - \bvarphi(q)\|_0$. Firstly, if $p' = p^*$, then we are done as $\langle \fdeq(q) , \fded(P)\rangle = \langle q,p^*\rangle$,  and \ref{eqn:lem2-goal} follows. Otherwise, by Claim \ref{claim:inner1} we must have had $|\theta(q,p') - \theta(q,p^*)| \leq \pi \cdot \eps$. Since the Cosine function is 1-Lipschitz, we have 
\[   |\cos(\theta(q,p')) - \cos (\theta(q,p^*))| \leq |\theta(q,p') - \theta(q,p^*)|  = O(\eps)\]
 Thus 
 \begin{equation}
     \begin{split}
         \langle \fdeq(q) , \fded(P)\rangle &= \langle q,p'\rangle \\
         & = \cos(\theta(q,p')) \\
         & \geq \cos (\theta(q,p^*)) - O(\eps) \\
         & =\max_{p \in P} \langle q,p\rangle   - O(\eps) \\
     \end{split}
 \end{equation}
 which proves the desired statement \ref{eqn:lem2-goal} after a constant factor rescaling of $\eps$.
 
 Next, suppose we are in case \textbf{(A)} where $S = \{p \in P \;' | \; \bvarphi(p) = \bvarphi(q)\}$ is non-empty. In this case, $S$ consists of the set of points $p$ with $\|\bvarphi(p) - \bvarphi(q)\|_0 = 0$. From this, it follows again by Claim \ref{claim:inner1} that $\theta(q,p) \leq \eps \pi$ for any $p \in S$. Thus, by the same reasoning as above, we have
 \begin{equation}
     \begin{split}
         \langle \fdeq(q) , \fded(P)\rangle &=  \frac{1}{|S|}\sum_{p \in S}\cos(\theta(q,p)) \\
         & \geq \frac{1}{|S|}\sum_{p \in S}(1-O(\eps))\\
         & \geq \frac{1}{|S|}\sum_{p \in S}( \langle q,p^*\rangle -O(\eps))\\ 
         & =\max_{p \in P} \langle q,p\rangle   - O(\eps) \\
     \end{split}
 \end{equation} 
 which again proves the desired statement \ref{eqn:lem2-goal} in case \textbf{(A)}, thereby completing the full proof in the case where there are no random projections.

Finally, we incorporate projections into the argument, by Fact \ref{fact:JL}, setting $\dproj = O(\frac{1}{\eps^2} \log\frac{m}{\delta \eps})$ (with a sufficiently large constant) and letting $\bpsi: \R^d \to \R^{\dproj}$ be the projection from  Fact \ref{fact:JL}, it follows that for any $q \in Q, p \in P$ we have $\langle q, p \rangle = \langle \bpsi(p), \bpsi(q) \rangle \pm \eps $ with probability $1- \frac{\delta\eps}{100m^3}$. Thus, by a union bound, it holds for all pairs $q \in Q, p \in P$ with probability $1- \frac{\delta \eps}{100m}$. Call this event $\mathcal{E}'$.
It follows that for any subset $S \subset P$ we have that 
\begin{equation}
    \begin{split}
        \langle \bpsi(q), \bpsi(\frac{1}{|S|} \sum_{p \in S} p ) \rangle &= \frac{1}{|S|}\sum_{p \in S}\langle \bpsi(q), \bpsi( p ) \rangle \\ 
        &= \frac{1}{|S|}\sum_{p \in S}\langle q,p \rangle  \pm \eps
    \end{split}
\end{equation}
Where we first used linearity of the mapping $\bpsi$, and then the fact that all vectors in $Q \cup S$ have unit norm. 
Let $\fdeq(Q) , \fded(P)$ be the FDE values without the inner projection $\bpsi$ and $\fdeq^{\bpsi}(Q) , \fded^{\bpsi}(P)$ be the FDE values with the inner projection $\bpsi$. Note that $\langle \fdeq^{\bpsi}(q) , \fded^{\bpsi}(P) \rangle  =  \langle \bpsi(q), \bpsi(\frac{1}{|S|} \sum_{p \in S} p ) \rangle $ for the set $S \subset P$ of points colliding with $q$. Thus, conditioned on the above:
\begin{equation}
    \begin{split}
       \frac{1}{|Q|} \langle \fdeq^{\bpsi}(Q) , \fded^{\bpsi}(P) \rangle & =     \frac{1}{|Q|} \sum_{q \in Q}  \langle \fdeq^{\bpsi}(q) , \fded^{\bpsi}(P) \rangle \\
       &= \frac{1}{|Q|}\sum_{q \in Q} \left( \langle \fdeq(q) , \fded(P) \rangle \pm \eps\right)\\
           &= \frac{1}{|Q|} \langle \fdeq(Q) , \fded(P)   \rangle \pm \eps
    \end{split}
\end{equation}
Thus, it follows that with $1- \frac{\delta \eps}{100m}$, the error with inner projections is within an $O(\eps)$ additive error term of the error without inner projections. We can fold this failure probability into the overall probability of failure: namely,  $\mathcal{E} \cap \mathcal{E}'$ together hold with probability $< \delta$ by a union bound, which is as desired.  
This concludes the proof of the $1-\delta$ probability guarantee on the error of the FDE estimate. 

\textbf{Expectation Analysis:} To compute the expectation, we first consider the case where there are no inner projections. To analyze the expectation in this setting, we use the fact that $| \langle \fdeq(q) , \fded(P)\rangle| \leq 1$ deterministically when there are no inner projections, which implies that the small $O(\eps \delta)$ probability of failure (i.e. the event that $\mathcal{E}$ does not hold) above can introduce at most a $O(\eps \delta) \leq O(\eps)$  additive error into the expectation, which is acceptable after a constant factor rescaling of $\eps$. Next, notice that applying inner projections only has the effect of replacing a bunch of inner products of the form $\langle x , y \rangle$ for vectors $x,y$ within the overall product $\langle \fdeq(Q) , \fded(P)\rangle$ with terms of the form $\langle \bpsi(x), \bpsi(y) \rangle$. By Fact \ref{fact:JL}, we have $\ex{\langle \bpsi(x), \bpsi(y) \rangle} = \langle x , y \rangle$, thus by linearity of expectation it follows that applying inner projections does not effect the expectation of the FDE product.


 
\end{proof}

Equipped with Theorem \ref{thm:FDE-approx}, as well as the sparsity bounds from Lemma \ref{lem:runtime}, we are now prepared to prove our main theorem on approximate nearest neighbor search under the Chamfer Similarity. We first prove the following proposition, which shows that the success probability of Theorem \ref{thm:FDE-approx} can be boosted by increasing $\reps$ by a factor of $\log(1/\delta)$, rather than increasing the $\delta$ parameter inside of $\ksim$ or $\dproj$, resulting in a much better dependency on $\delta$ in the final FDE dimension.

\begin{proposition}
\label{prop:FDE-reps}
 Fix any $\eps ,\delta > 0$, and sets $Q,P \subset \R^d$ of unit vectors, and let $m=|Q| + |P|$. 
Then setting $\ksim = O\left(\frac{\log (\eps^{-1} m)}{\epsilon^2}\right)$, $\reps = \frac{1}{\eps^2}\log(1/\delta)$,  and $\dproj = O\left(\frac{1}{\eps^2}  \log (\frac{m  d \reps}{\eps \delta })\right)$so that $\dfde = (\frac{m}{\eps })^{O(1/\eps^2)} \cdot \log(1/\delta) \log(d/\delta)$. 
we have
 \[  \nCH(Q,P)  - \eps \leq \frac{1}{|Q| \reps}\langle \fdeq(Q) , \fded(P) \rangle \leq  \nCH(Q,P) + \eps  \]
  and with probability at least $1-\delta$.     
\end{proposition}

\begin{proof}
Let $\bpsi_1,\bpsi_2,\dots,\bpsi_{\reps}$ be the set of random projections used for each of the $\reps$ repetitions. 
    First, we condition on the event $\cE_1$ that $\|\bpsi_i(x)\|_2 = 1 \pm O(\eps)$ for all $x \in P \cup Q$. 
    By Fact \ref{fact:JL}, for a single $x$ this occurs with probability $1-\frac{\eps \delta }{10 m d^2 \reps}$ (using a suitable constant in front of $\dproj$). Since there are $m$ such points $x$ and $\reps$ repetitions, by a union bound we have $\pr{\cE_1} \geq 1- \eps \delta /(d^2 10)$.  Given this fact, it follows for any  $q \in Q$,  set $S \subset P$ and $i \in [\reps]$, using the triangle inequality and Cauchy-Schwartz we have 
    \begin{equation}
        \begin{split}
                |\langle \bpsi_i(q) , \frac{1}{|S|}\sum_{p \in S} \bpsi_i(p) \rangle| &\leq \frac{1}{|S|}\sum_{p \in S} |\langle \bpsi_i(q) , \bpsi_i(p) \rangle | \\ 
              &  \leq \frac{1}{|S|}\sum_{p \in S} \| \bpsi_i(q)\|_2 \| \bpsi_i(p) \|_2 \\
                &  \leq 1 + O(\eps) \\
        \end{split}
    \end{equation}
    For each repetition $i \in [\reps]$, letting $\fdeq(Q)^i , \fded(P_j)^i$ be the coordinates in the final FDE vectors corresponding to that repetition,
   Then, letting $S_q^i \subset P$ be the set of points in $P$ colliding with $q$ in repetition $i$, we have:
       \begin{equation}
        \begin{split}
        \left|\frac{1}{|Q|}\langle \fdeq(Q)^i , \fded(P_j)^i \rangle\right| &\leq \frac{1}{|Q|}\sum_{q \in Q} \left|\langle \fdeq(q)^i , \fded(P_j)^i \rangle \right| \\
        &= \frac{1}{|Q|}\sum_{q \in Q}  |\langle \bpsi_i(q) , \frac{1}{|S_q^i|}\sum_{p \in S_q^i} \bpsi_i(p) \rangle|  \\
        &\leq 1 + O(\eps)
       \end{split}
    \end{equation}

It follows that, conditioned on $\cE_1$, the random variable $X_i = \frac{1}{|Q|}\langle \fdeq^i(Q) , \fded^i(P_j)\rangle$ is bounded in $[-2,2]$. Moreover, by Theorem \ref{thm:FDE-approx} $X_i$ not conditioned on $\cE_1$  has expectation $\nCH(Q,P_j) \pm \eps$. By the law of total probability, we have:

\[ \ex{X_i | \cE_1} =  \frac{  \ex{X_i} - \ex{X_i | \neg \cE_1} \pr{\neg \cE_1} }{\pr{\cE_1} } \]
Thus, we must understand the worst case magnitude of $|\ex{X_i | \neg \cE_1}|$. To do this, first, let $\bPi_i \in \{\frac{1}{\sqrt{\dproj}} , \frac{-1}{\sqrt{\dproj}} \}^{\dproj \times d}$ be the $i$-th random projection matrix from Fact \ref{fact:JL}. In other words, $\bpsi_i(x)  = \bPi_i(x)$. Thus, for any normalized vector $x$, we have $\|\bpsi_i(x)\|_2 \leq \|\bPi_i\|_2 \leq \|\bPi_i\|_F = \sqrt{d}$. It follows that for any two normalized vectors $x,y$, we have $|\langle \bpsi_i(x) , \bpsi_i(y) \rangle | \leq d$ deterministically.  
\begin{equation}
    \begin{split}
        |X_i| &= |\frac{1}{|Q|} \sum_{q \in Q} \frac{1}{|S_q^i|}\sum_{p \in S_q^i} \langle \bpsi_i(q) , \bpsi_i(p) \rangle| \\
        &\leq \frac{1}{|Q|} \sum_{q \in Q} \frac{1}{|S_q^i|}\sum_{p \in S_q^i}  d \leq d
    \end{split}
\end{equation}

It follows that 
$|\ex{X_i | \neg \cE_1} \pr{\neg \cE_1}| \leq d \cdot \eps \delta /(10d^2)= \eps \delta /(10d)$, from which it follows that $\ex{X_i | \cE_1} = \ex{X_i} \pm P(\eps) = \nCH(Q, P_j \pm O(\eps)$. 
Then, by Chernoff bounds, averaging over $\reps = O(\frac{1}{\eps^2}\log(1 / \delta))$ repetitions, we have that $\frac{1}{\reps} \sum_{i=1}^{\reps} X_i = \nCH \pm O(\eps)$ with probability $1-\delta$ as needed.

\end{proof}

\textbf{Theorem \ref{thm:FDE-ANN}}. {\it
Fix any $\eps > 0$, query $Q$, and dataset $P = \{P_1,\dots,P_n\}$, where $Q \subset \R^d$ and each $P_i \subset \R^d$ is a set of unit vectors. Let $m=|Q| + \max_{i \in [n]}|P_i|$. 
Then setting $\ksim = O(\frac{\log \eps^{-1} m}{\epsilon^2})$, $\dproj = O(\frac{1}{\eps^2} \log (mdn/\eps))$ and $\reps = O(\frac{1}{\eps^2}\log n)$ so that $\dfde =  (\frac{m}{\eps})^{O(1/\eps^2)} \cdot  \log^2 n $. Then setting $i^* = \arg \max_{i \in [n]}\langle \fdeq(Q) , \fded(P_i) \rangle$, with high probability (i.e. $1-1/\poly(n)$) we have: 
\[ \nCH(Q,P_{i^*}) \geq \max_{i \in [n]} \nCH(Q,P_i) - \eps \]
Given the query $Q$, the document $P^*$ can be recovered in time $O\left(|Q| \max\{d,n\}  \frac{1}{\eps^4} \log(\frac{mdn}{\eps}) \log n \right)$. 
}
\begin{proof}[Proof of Theorem \ref{thm:FDE-ANN}]


We apply Proposition \ref{prop:FDE-reps} with $\delta = 1/n^{O(1)}$. 
It follows that for each repetition $i \in [\reps]$, letting $\fdeq(Q)^i , \fded(P_j)^i$ be the coordinates in the final FDE vectors corresponding to that repetition,
we have 

\begin{equation}\label{eqn:corFinal}
    \left| \sum_{i=1}^{\reps} \frac{1}{\reps|Q|}\langle \fdeq^i(Q) , \fded^i(P_j) \rangle - \nCH(Q,P_j) \right| \leq O(\eps)
\end{equation}
with probability $1-1/n^C$ for any arbitrarily large constant $C>1$. Note also that $ \sum_{i=1}^{\reps} \frac{1}{\reps|Q|}\langle \fdeq^i(Q) , \fded^i(P_j)\rangle =  \frac{1}{\reps|Q|}\langle \fdeq(Q) , \fded(P_j)\rangle$, where $\fdeq(Q) , \fded(P_j)$ are the final FDEs. We can then condition on (\ref{eqn:corFinal}) holding for all documents $j \in [n]$, which holds with probability $1-1/n^{C-1}$ by a union bound. Conditioned on this, we have  
\begin{equation}
    \begin{split}
        \nCH(Q,P_{i^*}) &\geq  \frac{1}{\reps|Q|}\langle \fdeq(Q) , \fded(P_{i^*})  \rangle - O(\eps)\\
        &= \max_{j \in [n]}\frac{1}{\reps|Q|}\langle \fdeq(Q) , \fded(P_{j})  \rangle  - O(\eps) \\
        & \geq \max_{j \in [n]} \nCH(Q,P_j) - O(\eps)
    \end{split}
\end{equation}
which completes the proof of the approximation after a constant factor scaling of $\eps$. The runtime bound follows from the runtime required to compute $\fdeq(Q)$, which is $O(|Q| \reps d (\dproj + \ksim)) = O(|Q| \frac{\log n}{\eps^2} d (\frac{1}{\eps^2} \log(mdn/\eps) + \frac{1}{\eps^2}\log (m/\eps)) $, plus the runtime required to brute force search for the nearest dot product. Specifically, note that each of the $n$ FDE dot products can be computed in time proportional to the sparsity of $\fdeq(Q)$, which is at most $O(|Q| \dproj \reps) = O(|Q| \frac{1}{\eps^4} \log(mdn/\eps) \log n)$. Adding these two bounds together yields the desired runtime. 
\end{proof}

\section{Additional Dataset Information}\label{app:datasets}
In Table \ref{fig:dataset} we provide further dataset-specific information on the BEIR retrieval datasets used in this paper. Specifically, we state the sizes of the query and corpuses used, as well as the average number of embeddings produced by the ColBERTv2 model per document. Specifically, we consider the six BEIR retrieval datasets MS MARCO \cite{nguyen2016ms},
NQ \cite{kwiatkowski2019natural},
HotpotQA \cite{yang2018hotpotqa},
ArguAna \cite{wachsmuth2018retrieval},
SciDocs \cite{cohan2020specter},  and
Quora \cite{thakur2021beir}, Note that the MV corpus (after generating MV embeddings on all documents in a corpus) will have a total of $\#$Corpus $\times$ (Avg $\#$ Embeddings per Doc) token embeddings. 
For even further details, see the BEIR paper \cite{thakur2021beir}.

\begin{figure}[h]
    \centering
    \begin{tabular}{|c|c|c|c|c|c|c|} 
\hline 
& MS MARCO & HotpotQA & NQ  & Quora & SciDocs &  ArguAna \\ \hline
$\#$Queries & 6,980 &7,405 & 3,452& 10,000 &1,000 &  1,406 \\ \hline
$\#$Corpus &8.84M & 5.23M &  2.68M  &523K & 25.6K  & 8.6K  \\\hline
\makecell{Avg \\$\#$ Embeddings\\ per Doc}
 & 78.8&	68.65&	100.3 & 18.28 &	165.05 &	154.72 \\\hline
\end{tabular}
    \caption{Dataset Specific Statistics for the BEIR datasets considered in this paper. }
    \label{fig:dataset}
\end{figure}

\section{Additional Experiments and Plots} \label{sec:additional-plots}
In this Section, we provide additional plots to support the experimental results from Section \ref{sec:eval}. We provide plots for all six of the datasets and additional ranges of the $x$-axis for our experiments in Section (§\ref{sec:fde-experimental}), as well as additional experimental results, such as an evaluation of variance, and of the quality of final projections in the FDEs. 

\paragraph{FDE vs. SV Heuristic Experiments.} In Figures \ref{fig:sv_mv_full5k} and \ref{fig:sv_mv_full500}, we show further datasets and an expanded recall range for the comparison of the SV Heuristic to retrieval via FDEs. We find that our 4k+ dimensional FDE methods outperform even the deduplciated SV heuristic (whose cost is somewhat unrealistic, since the SV heuristic must over-retrieve to handle duplicates) on most datasets, especially in lower recall regimes. 
In Table \ref{table:sv_mv_table}, we compare how many candidates must be retrieved by the SV heuristic, both with and without the deduplication step, as well as by our FDE methods, in order to exceed a given recall threshold.

\begin{table}[h!]
    \centering
    \begin{tabular}{|c|c|c|c|c|c|c|} \hline
\makecell{Recall\\ Threshold} &SV non-dedup  & SV dedup & 20k FDE & 10k FDE &4k FDE&  2k FDE  \\ \hline
80\% & 1200 & 300 & 60 & 60& 80 & 200\\  \hline
85\% & 2100 & 400&  90 & 100 & 200& 300 \\  \hline
90\% &4500 & 800& 200 & 200& 300 &  800 \\ \hline
95\% & >10000& 2100 & 700 &800  & 1200 & 5600 \\ \hline 
\end{tabular}
\vspace{1em}
  \caption{FDE retrieval vs SV Heuristic: number of candidates that must be retrieved by each method to exceed a given recall on MS MARCO. The first two columns are for the SV non-deduplicated and deduplicated heuristics, respectively, and the remaining four columns are for the FDE retrieved candidates with FDE dimensions $\{20480,10240,4096,2048\}$, respectively. Recall$@N$ values were computed in increments of $10$ between $10$-$100$, and in increments of $100$ between $100$-$10000$, and were not computed above $N > 10000$.  } 
    \label{table:sv_mv_table}
\end{table}

\paragraph{Retrieval quality with respect to exact Chamfer.} In Figure \ref{fig:mv_vs_bf_full}, we display the full plots for FDE Recall with respects to recovering the $1$-nearest neighbor under Chamfer Similarity for all six BEIR datasets that we consider, including the two omitted from the main text (namely, SciDocs and ArguAna).

\subsection{Variance of FDEs.} \label{app:variance} Since the FDE generation is a randomized process, one natural concern is whether there is large variance in the recall quality across different random seeds. Fortunately, we show that this is not the case, and the variance of the recall of FDE is essentially negligible, and can be easily accounted for via minor extra retrieval. To evaluate this, we chose four sets of FDE parameters $(\reps,\ksim,\dproj)$ which were Pareto optimal for their respective dimensionalities, generated $10$ independent copies of the query and document FDEs for the entire MS MARCO dataset, and computed the average recall$@100$ and $1000$ and standard deviation of these recalls. The results are shown in Table \ref{fig:variance}, where for all of the experiments the standard deviation was between $0.08$-$0.3\%$ of a recall point, compared to the $~80$-$95\%$ range of recall values. Note that Recall$@1000$ had roughly twice as small standard deviation as Recall$@$100.

\begin{table}[h!]
    \centering
    \begin{tabular}{|c|c|c|c|c|} 
\hline 
FDE params $(\reps,\ksim,\dproj)$ &$(20,5,32)$ &  $(20,5,16)$& $(20,  4,16)$ & $(20,4,8)$ \\ 
\hline 
FDE Dimension & 20480 & 10240 & 5120 & 2560 \\
\hline
 Recall@100  & 83.68 & 82.82	&80.46 &	77.75 \\ 
Standard Deviation & 0.19&	0.27	& 0.29&	0.17\\ 
\hline
Recall@1000  & 95.37 & 94.88 & 93.67 & 91.85  \\ 
Standard Deviation & 0.08 & 0.11 & 0.16 & 0.12  \\ 
\hline 
\end{tabular}
\vspace{1em}
  \caption{Variance of FDE Recall Quality on MS MARCO.  }
    \label{fig:variance}
\end{table}


\newpage
\begin{figure}[t]
    \centering
  \includegraphics[width=\linewidth]{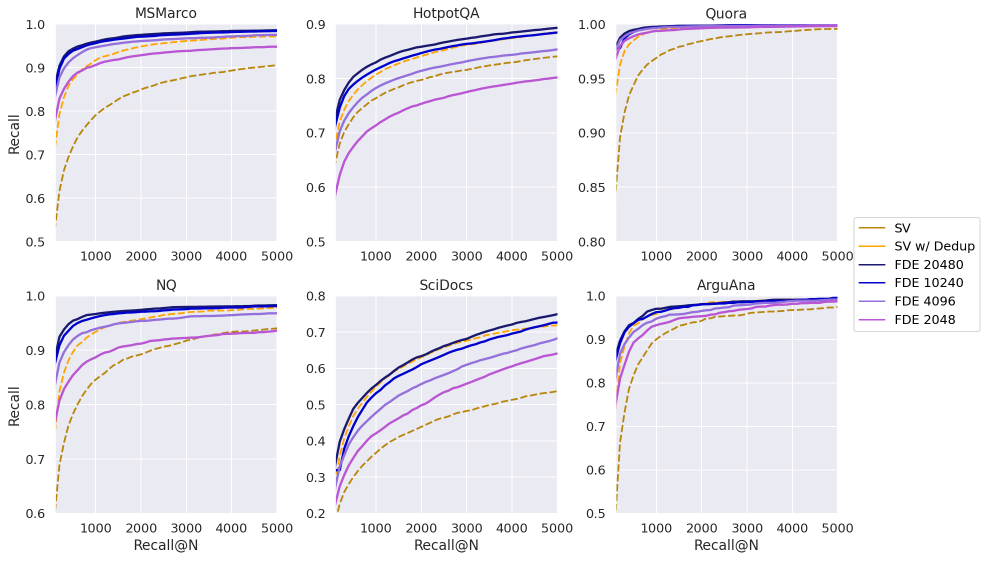}
   \caption{FDE retrieval vs SV Heuristic, Recall$@100$-$5000$}
         \label{fig:sv_mv_full5k} 
\end{figure}

\begin{figure}[t]
    \centering
  \includegraphics[width=\linewidth]{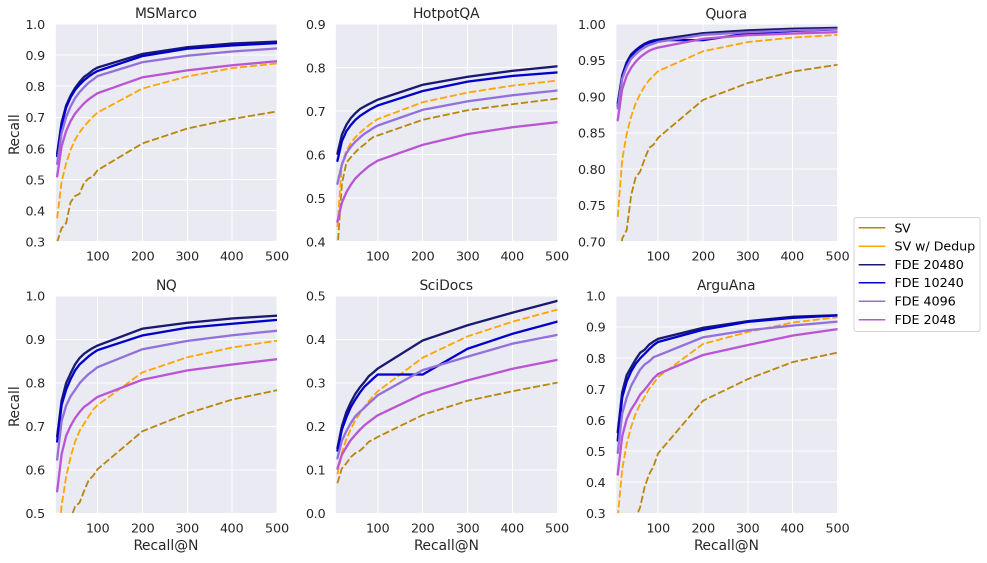}
   \caption{FDE retrieval vs SV Heuristic, Recall$@5$-$500$}
         \label{fig:sv_mv_full500} 
\end{figure}

\begin{figure}[h!]
    \centering
  \includegraphics[width=\linewidth]{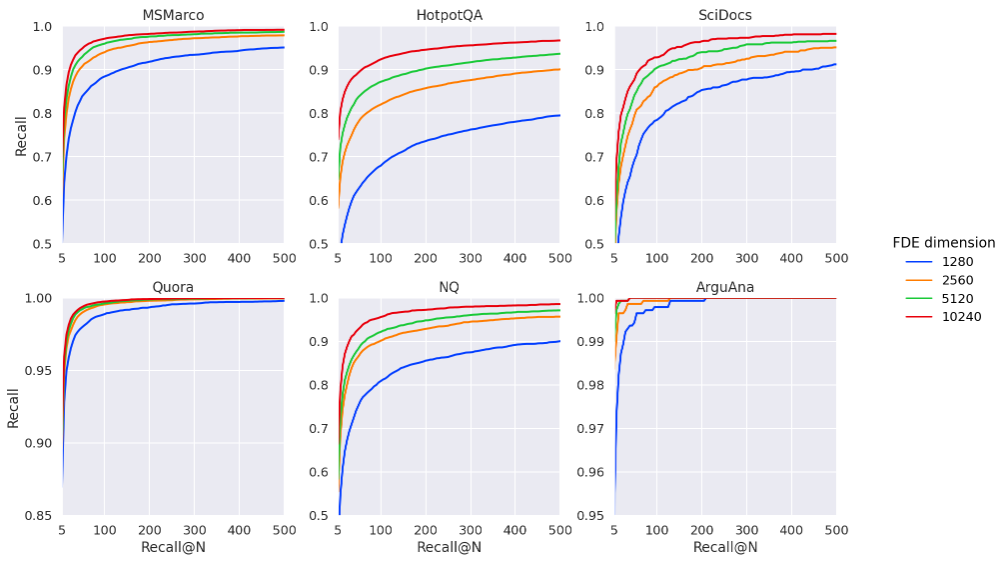}
    \caption{Comparison of FDE recall with respect to the most similar point under Chamfer.}
          \label{fig:mv_vs_bf_full} 
\end{figure}

\begin{table}[h!]
    \centering
    \begin{tabular}{c|c c | c c | }\hline
        Experiment & w/o projection & w/ projection &  w/o projection & w/ projection \\\hline
      Dimension  & 2460& 2460& 5120& 5120 \\\hline
      Recall@100  & 77.71 & 78.82 &80.37 & 83.35 \\\hline
      Recall@1000  & 91.91 & 91.62 & 93.55 & 94.83 \\\hline
         Recall@10000  &97.52 & 96.64 & 98.07 & 98.33 \\\hline
    \end{tabular}
    \vspace{1em}
    \caption{Recall Quality of Final Projection based FDEs with $\dfde \in \{2460,5120\}$}
    \label{tab:finalproj1}
\end{table}

\begin{table}[h!]
    \centering
    \begin{tabular}{c|c c | c c | }\hline
        Experiment & w/o projection & w/ projection &  w/o projection & w/ projection \\\hline
      Dimension  & 10240& 10240& 20480& 20480 \\\hline
      Recall@100  &82.31 & 85.15 & 83.36 & 86.00 \\\hline
      Recall@1000  & 94.91 & 95.68 & 95.58 & 95.95 \\\hline
         Recall@10000  &98.76 & 98.93 & 98.95 & 99.17\\\hline
    \end{tabular}
    \vspace{1em}
      \caption{Recall Quality of Final Projection based FDEs with $\dfde \in \{10240,20480\}$}
       \label{tab:finalproj2}
\end{table}

\newpage

\subsection{Comparison to Final Projections.} \label{app:finalproj}
We now show the effect of employing final projections to reduce the target dimensionality of the FDE's. For all experiments, the final projection $\bpsi'$ is implemented in the same way as inner projections are: namely, via multiplication by a random $\pm 1$ matrix. 
We choose four target dimensions, $\dfde \in \{2460,5120,10240,20480\}$, and choose the Pareto optimal parameters $(\reps, \ksim, \dproj)$ from the grid search without final projections in Section \ref{sec:fde-experimental}, which are $(20,4,8),(20,5,8),(20,5,16),(20,5,32)$. We then build a large dimensional FDE with the parameters $(\reps,\ksim,\dproj) = (40,6,128)$. Here, since $d = \dproj$, we do not use any inner projections when constructing the FDE. We then use a single random final projection to reduce the dimensionality of this FDE from $\reps \cdot 2^{\ksim} \cdot \dproj = 327680$ down to each of the above target dimensions $\dfde$. The results are show in Tables \ref{tab:finalproj1} and \ref{tab:finalproj2}. Notice that incorporating final projections can have a non-trivial impact on recall, especially for Recall$@$100, where it can increase by around $3\%$. In particular, FDEs with the final projections are often better than FDEs with twice the dimensionality without final projections. The one exception is the  $2460$-dimensional FDE, where the Recall$@$100 only improved by $1.1\%$, and the Recall$@$1000 was actually lower bound $0.3\%$.

\subsection{Ball Carving}\label{apx:ballcarve}
%
%

Continuing our discussion from Section~\ref{sec:online}, we show that ball-carving at this threshold of $0.7$ gives non-trivial efficiency gains. Specifically, in Figure \ref{fig:BallCarveQPS}, we plot the per-core queries-per-second of re-ranking (i.e. computing $\CH(Q_C, P)$) against varying ball carving thresholds for the MS MARCO dataset. Please see the discussion in Section~\ref{sec:online} for analysis of the figure.


\begin{figure}
    \centering
    \includegraphics[scale = 0.4]{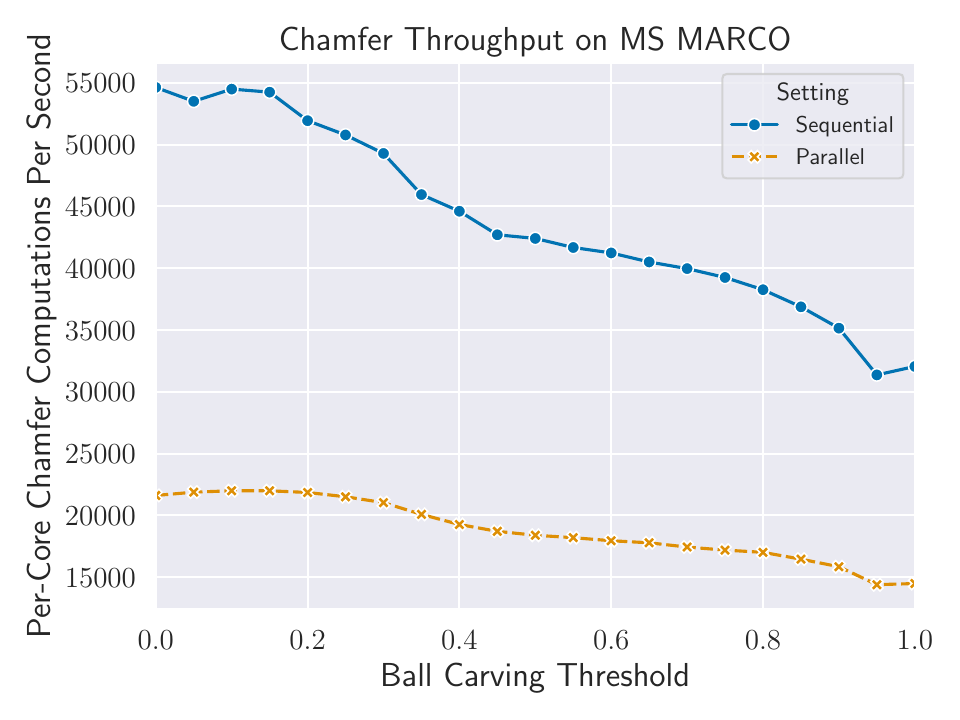}
    \caption{Per-Core Re-ranking QPS versus Ball Carving Threshold, on MS MARCO dataset.}
    \label{fig:BallCarveQPS}
\end{figure}

\begin{figure*}[h]
  \subfloat{\includegraphics[width=.33\textwidth]{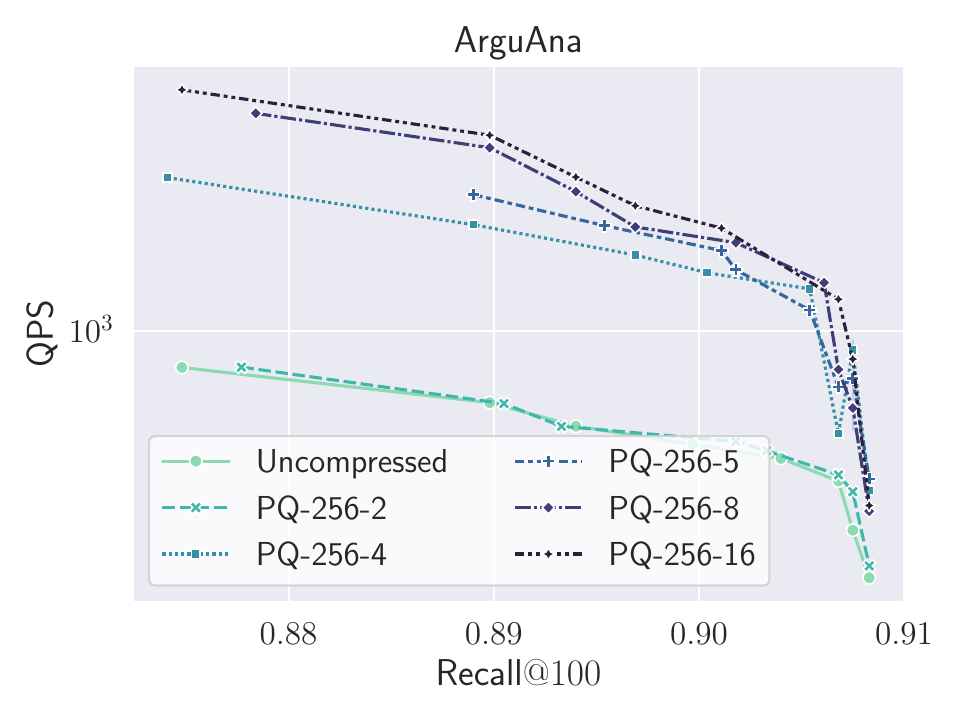}}
  \subfloat{\includegraphics[width=.33\textwidth]{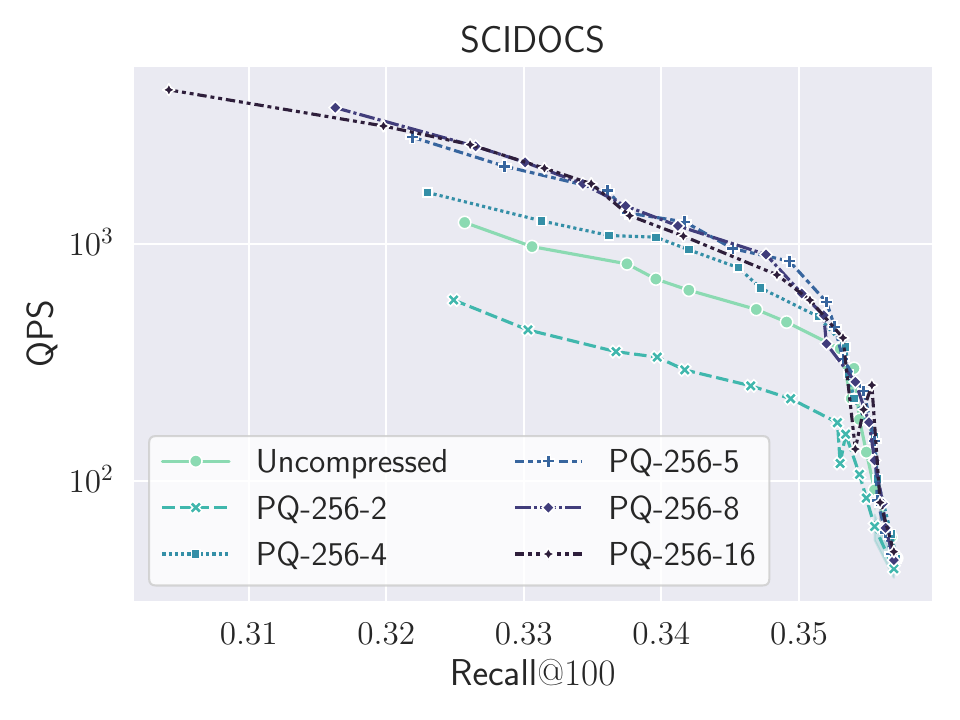}}
  \subfloat{\includegraphics[width=.33\textwidth]{plots/pq_vs_qps_100/quora-pq_vs_nopq.pdf}}
  
  \subfloat{\includegraphics[width=.33\textwidth]{plots/pq_vs_qps_100/nq-pq_vs_nopq.pdf}}
  \subfloat{\includegraphics[width=.33\textwidth]{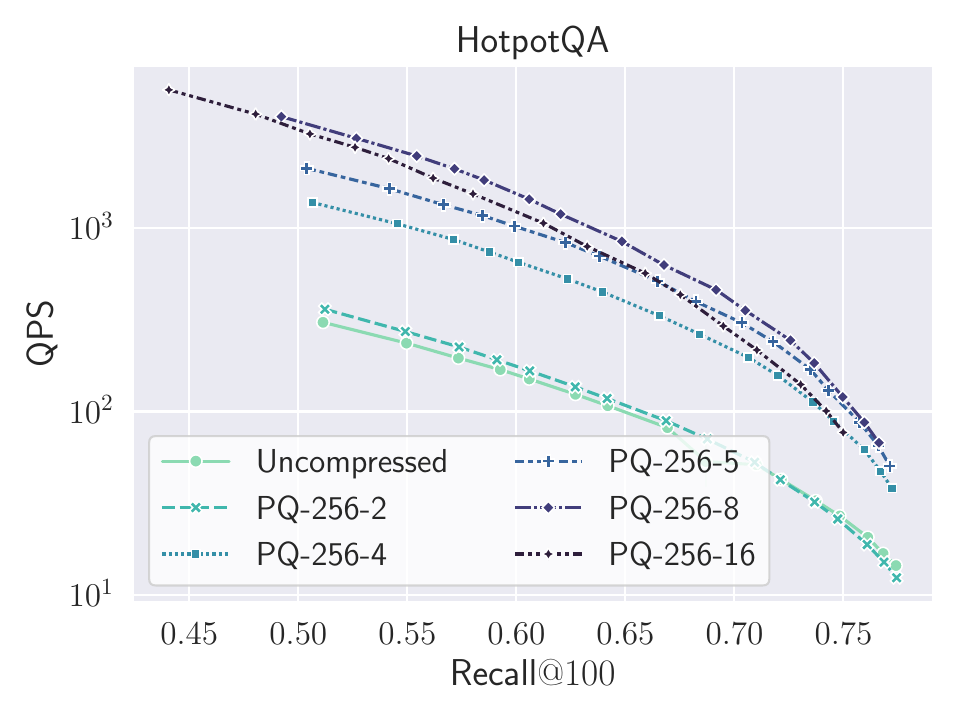}}
  \subfloat{\includegraphics[width=.33\textwidth]{plots/pq_vs_qps_100/msmarco-pq_vs_nopq.pdf}}
  
\vspace{1em}
  \caption{\small Plots showing the QPS vs. Recall@$100$ for \name{} on the BEIR datasets we evaluate in this paper. The different curves are obtained by using different PQ methods on 10240-dimensional FDEs.}
\label{pq-qps-100-full}
\end{figure*}
\begin{figure*}[h]
  \subfloat{\includegraphics[width=.33\textwidth]{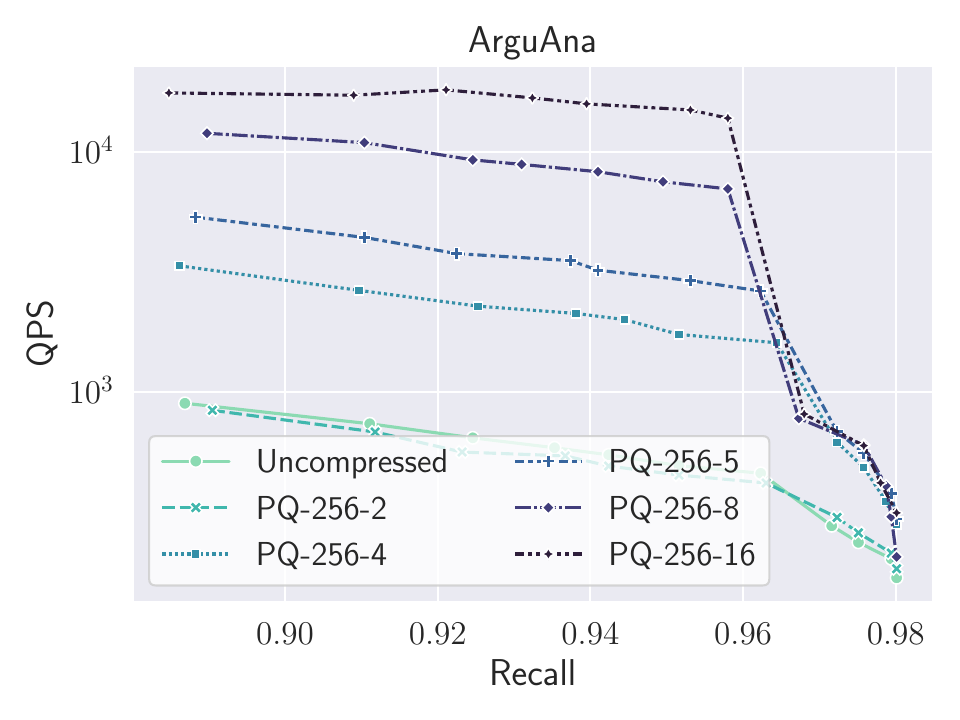}}
  \subfloat{\includegraphics[width=.33\textwidth]{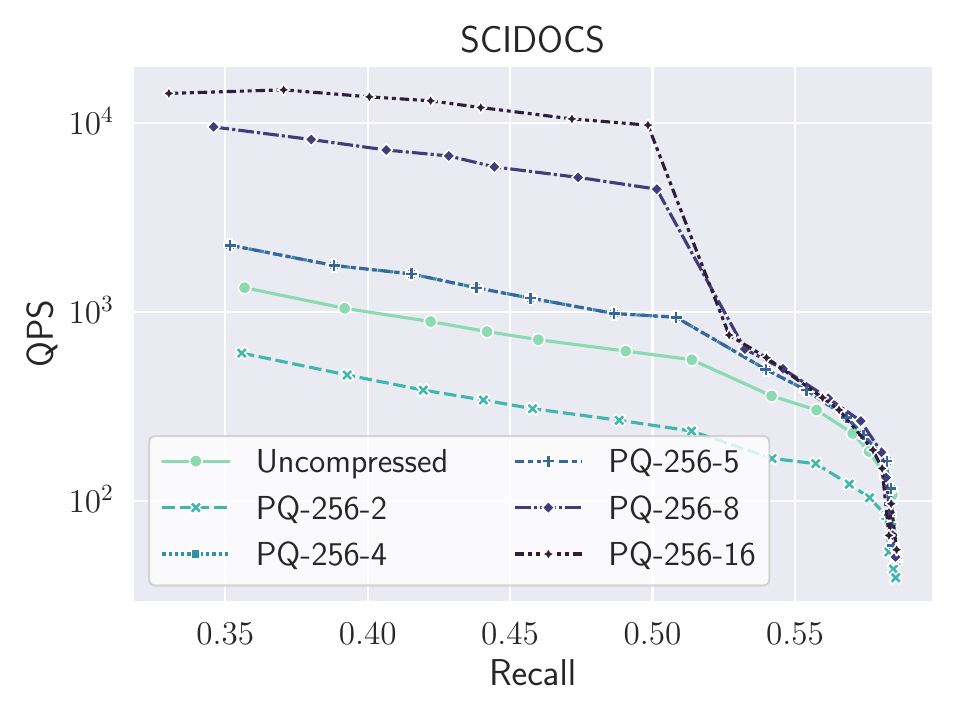}}
  \subfloat{\includegraphics[width=.33\textwidth]{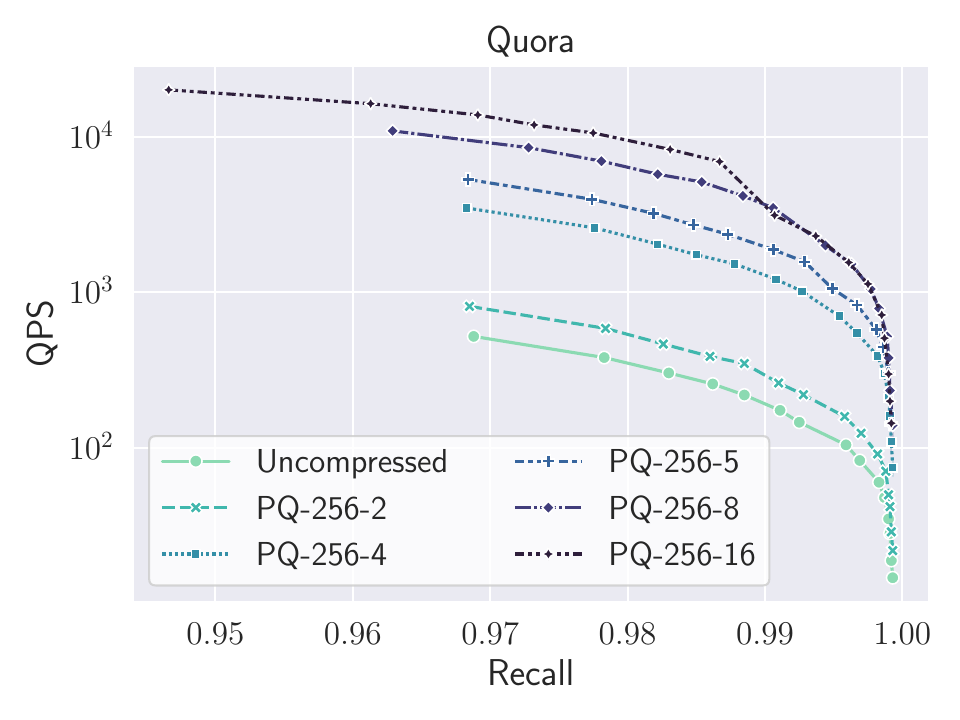}}
  
  \subfloat{\includegraphics[width=.33\textwidth]{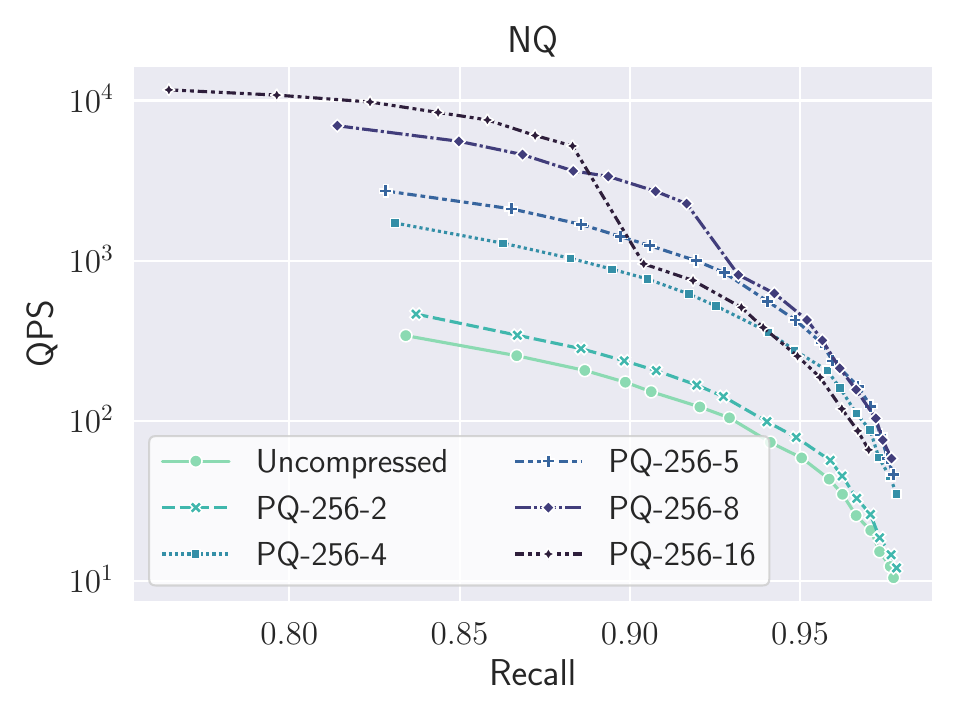}}
  \subfloat{\includegraphics[width=.33\textwidth]{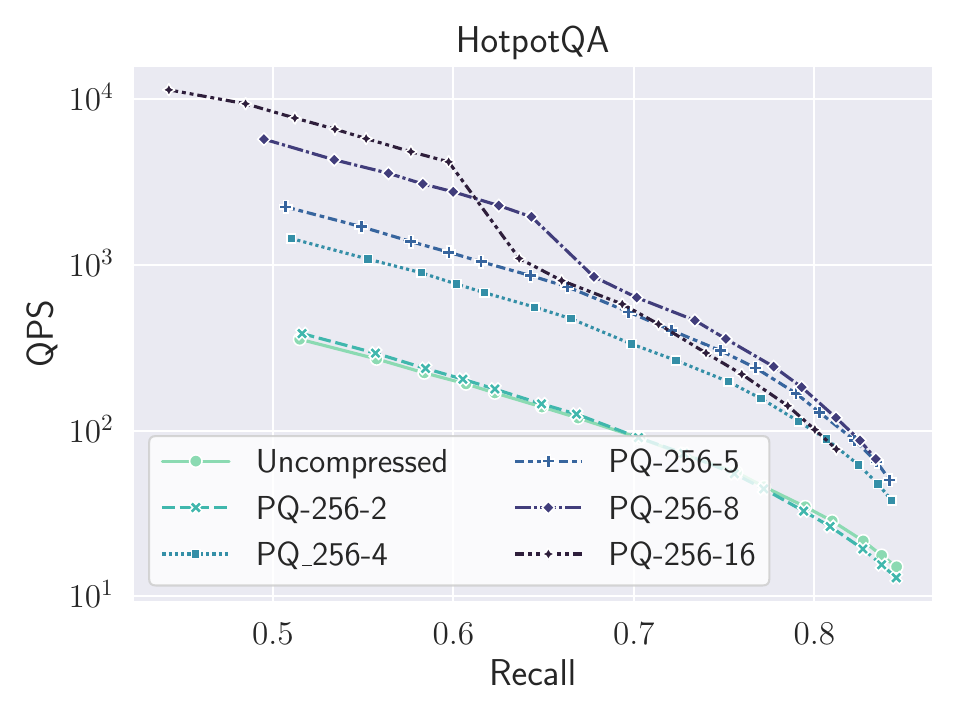}}
  \subfloat{\includegraphics[width=.33\textwidth]{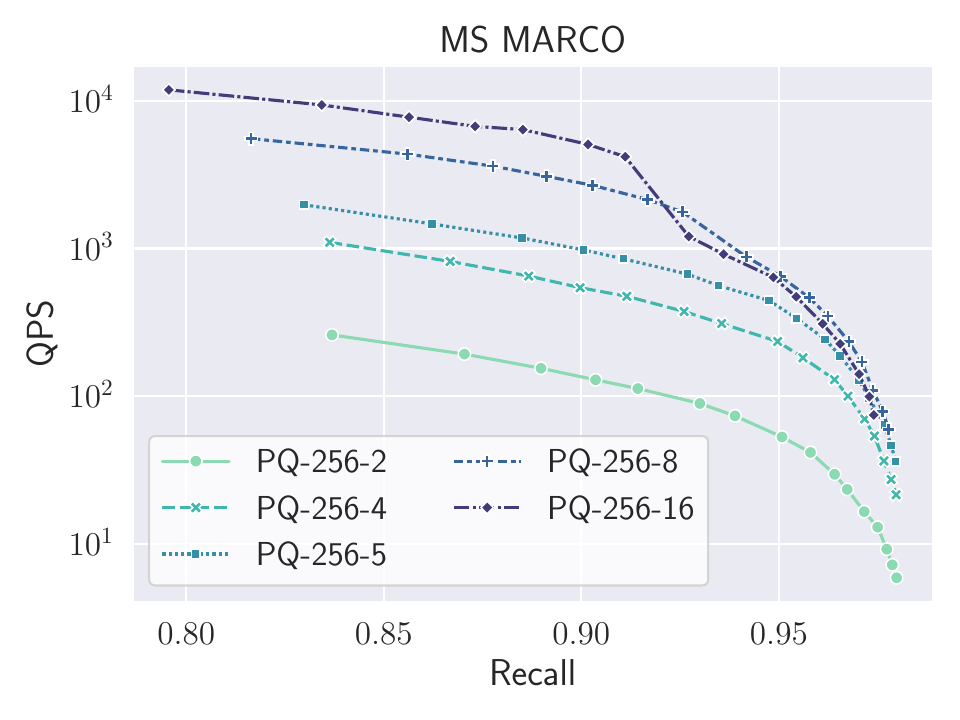}} 
  
\vspace{1em}
  \caption{\small Plots showing the QPS vs. Recall@$1000$ for \name{} on the BEIR datasets we evaluate in this paper. The different curves are obtained by using different PQ methods on 10240-dimensional FDEs.}\label{pq-qps-1k-full}
\end{figure*}

\subsection{Product Quantization}\label{apx:PQ}
\paragraph{PQ Details} We implemented our product quantizers using a simple ``textbook'' $k$-means based quantizer. Recall that $\mathsf{AH}\text{-}C\text{-}G$ means that each consecutive group of $G$ dimensions is represented by $C$ centers. We train the quantizer by: (1) taking for each group of dimensions the coordinates of a sample of at most 100,000 vectors from the dataset, and (2) running $k$-means on this sample using $k=C=256$ centers until convergence. Given a vector $x \in \R^d$, we can split $x$ into $d/G$ blocks of coordinates $x_{(1)},\dots,x_{(d/G)} \in \R^G$ each of size $G$. The block $x_{(i)}$ can be compressed by representing $x_{(i)}$ by the index of the centroid from the $i$-th group that is nearest to $x_{(i)}$. Since there are $256$ centroids per group, each block $x_{(i)}$ can then be represented by a single byte.

\paragraph{Results} In Figures~\ref{pq-qps-100-full} and \ref{pq-qps-1k-full} we show the full set of results for our QPS experiments from Section~\ref{sec:online} on all of the BEIR datasets that we evaluated in this paper. We include results for both Recall@$100$ (Figure~\ref{pq-qps-100-full}) and Recall@$1000$ (Figure~\ref{pq-qps-1k-full}).

We find that $\mathsf{PQ}\text{-}256\text{-}8$ is consistently the best performing PQ codec across all of the datasets that we tested.
Not using PQ at all results in significantly worse results (worse by at least 5$\times$ compared to using PQ) at the same beam width for the beam; however, the recall loss due to using $\mathsf{PQ}\text{-}256\text{-}8$ is minimal, and usually only a fraction of a percent. 
Since our retrieval engine works by over-retrieving with respect to the FDEs and then reranking using Chamfer similarity, the loss due to approximating the FDEs using PQ can be handled by simply over-retrieving slightly more candidates.

We also observe that the difference between different PQ codecs is much more pronounced in the lower-recall regime when searching for the top $1000$ candidates for a query. For example, most of the plots in Figure~\ref{pq-qps-1k-full} show significant stratification in the QPS achieved in lower recall regimes, with  $\mathsf{PQ}\text{-}256\text{-}16$ (the most compressed and memory-efficient format) usually outperforming all others; however, for achieving higher recall, $\mathsf{PQ}\text{-}256\text{-}16$ actually does much worse than slightly less compressed formats like $\mathsf{PQ}\text{-}256\text{-}8$ and $\mathsf{PQ}\text{-}256\text{-}4$.

\end{document}